\documentclass[10pt,a4paper]{article}

\usepackage[utf8]{inputenc}
\usepackage[T1]{fontenc}
\usepackage{microtype}
\usepackage{fourier}

\usepackage{tikz}
\usepackage[margin=10pt,font=small,labelfont=bf,labelsep=period]{caption}
\usepackage{amsmath}
\usepackage{amsthm}
\usepackage{amssymb}
\usepackage{mathscinet}
\usepackage{graphicx}
\numberwithin{equation}{section}

\usepackage[nottoc,notlot,notlof]{tocbibind}

\usepackage{aliascnt}
\usepackage[colorlinks,linkcolor=blue,citecolor=blue]{hyperref}

\theoremstyle{plain}

\newtheorem{theorem}{Theorem}[section]

\newaliascnt{lemma}{theorem}
\newtheorem{lemma}[lemma]{Lemma}
\aliascntresetthe{lemma}

\newaliascnt{corollary}{theorem}
\newtheorem{corollary}[corollary]{Corollary}
\aliascntresetthe{corollary}

\theoremstyle{definition}

\newaliascnt{definition}{theorem}
\newtheorem{definition}[definition]{Definition}
\aliascntresetthe{definition}

\newaliascnt{example}{theorem}

\aliascntresetthe{example}

\newaliascnt{remark}{theorem}
\newtheorem{remark}[remark]{Remark}
\aliascntresetthe{remark}

\newcommand{\R}{\mathbf{R}}
\newcommand{\C}{\mathbf{C}}

\newcommand{\N}{\mathbf{N}}
\renewcommand{\l}{\lambda}
\renewcommand{\epsilon}{\varepsilon}

\newcommand{\abs}[1]{\left\lvert #1 \right\rvert}
\newcommand{\avg}[1]{\bigl\langle #1 \bigr\rangle}

\DeclareMathOperator{\Tr}{Tr}
\hyphenation{self-adjoint anti-peakon anti-peakons}

\title{The Calogero-Fran\c coise integrable system: algebraic geometry, Higgs fields, and the inverse problem}

\author{%
Steven Rayan \thanks{Department of Mathematics and Statistics, University of Saskatchewan, 106 Wiggins Road, Saskatoon, Saskatchewan, S7N\,5E6, Canada; rayan@math.usask.ca}
\and
  Thomas Stanley\thanks{Department of Mathematics and Statistics, University of Saskatchewan, 106 Wiggins Road, Saskatoon, Saskatchewan, S7N\,5E6, Canada; ths808@mail.usask.ca}
  \and
  Jacek Szmigielski\thanks{Department of Mathematics and Statistics, University of Saskatchewan, 106 Wiggins Road, Saskatoon, Saskatchewan, S7N\,5E6, Canada; szmigiel@math.usask.ca}
}

\date{September 12, 2018}

\makeatletter
\hypersetup{%
  pdfauthor={Steven Rayan, Thomas Stanley, Jacek Szmigielski},
  pdftitle={\@title},
}
\makeatother

\begin{document}

\maketitle

\begin{center}\emph{Dedicated to Emma Previato on the occasion of her 65th birthday.}\end{center}

\begin{abstract}
  We review the Calogero-Fran\c coise integrable system, which is a generalization of the Camassa-Holm system.  We express solutions as (twisted) Higgs bundles, in the sense of Hitchin, over the projective line.  We use this point of view to (a) establish a general answer to the question of linearization of isospectral flow and (b) demonstrate, in the case of two particles, the dynamical meaning of the theta divisor of the spectral curve in terms of mechanical collisions.  Lastly, we outline the solution to the inverse problem for CF flows using Stieltjes' continued fractions. 
  \end{abstract}

\tableofcontents{}

\section{Introduction}
\label{sec:intro}
The idea of viewing certain non-linear problems as arising from {\sl isospectral} deformations of linear operators goes back to P. D. Lax, who, in \cite{lax:kdv}, connected
the existence of infinitely-many integrals of motion for the Korteweg-de Vries equation (KdV) with an isospectral deformation of a linear operator $L_{u}$ parametrized by a solution $u$ to the KdV 
equation.  More concretely, introducing the one-dimensional (in $x$) Schr\"odinger  operator $L_{u}=-D^2 +u(x,t)$ he observed that the KdV equation $u_t+uu_x+u_{xxx}=0$ 
was equivalent to the operator equation 
\begin{equation}\label{eq:kdvLax} 
\dot L_{u}=[P,L_{u}], 
\end{equation} 
where $P$ is certain third-order differential operator depending on $u$ and 
its $x$ derivative.    This line of research was taken up by J. Moser, especially in the context of Hamiltonian systems with finitely many degrees of freedom
\cite{moser:three-integrable,moser:quadrics}.  
Some of these systems, like the finite Toda lattice or an $n$-dimensional rigid body \cite{manakov:rigid}, had a finite dimensional 
Lax pair and the dynamical problem despite being isospectral from 
the outset had no a priori relation to complex geometry even though 
a deeper analysis in each case was unequivocally pointing to the existence of such a connection.  This connection was established within a Lie-theoretic context 
in \cite{adler-van:Kac-Moody} leading to many years of fruitful 
interaction between Lie theory (mostly Kac-Moody Lie algebras) and 
the theory of integrable systems.  One of the decisive contributions to this 
theme was E. Previato's paper with M. Adams and J. Harnad \cite{adams:1988isospectral}, in some sense complementing the work of J. Moser \cite{moser:quadrics}.

In the mid-1970s, yet another class of integrable, finite-dimensional systems  was obtained from reductions 
of Lax integrable PDEs, i.e. famous finite-zone potentials \cite{novikovSP:periodic, dubrovin-novikov, lax:kdv-periodic, mckean:Hill, its-matveev:finitegap}, 
and led to the appearance of invariant \textit{spectral curves}.  
This had become a dominant research direction for many years to come and 
Emma beautifully reviewed this vast area in her 1993 lecture notes \cite{previato:1996seventy} placing emphasis on the 
old paper of Bourchnal and Chaundy \cite{burchnall:commuting}.  

The present paper is about a different occurrence of spectral curves, also due to the 
reduction from a PDE  given by a Lax pair equation, but the reduction 
is in smoothness.  We now turn to describing schematically the situation, leaving the details to \autoref{sec:CFgeneral}.  
The Camassa-Holm equation \cite{camassa-holm} (CH)
\begin{equation*} 
m_t+2mu_x+um_x=0, \quad m=u-u_{xx}.  
\end{equation*}
was invented as a model for nonlinear water waves with nonlinear dispersion. It has a Lax pair 
 \begin{equation*}
 L=-D^2 +\frac14-\lambda m, \qquad P=(\frac{1}{2\lambda}+u)D+\frac{u_x}{2},  
 \end{equation*} 
 from which it is clear that the central object in this endeavour is 
 $m$, while $u$ should be thought of as a potential producing $m$.  
 Even though the Lax equation has to be slightly modified (see \autoref{sec:CF}), 
 the computation is elementary for smooth $m$.  However, this is not so 
 if $m$ is non-smooth, for example if $m$ is a discrete measure, because then the Lax equation involves a multiplication of 
 distributions with overlapping singular supports and this leads to 
 certain subtle phenomena (see i.e. \cite{chang-szmigielski:mCH}).

It is the presence of spectral curves, which in this setting arise out of the reduction from a smooth $m$ to a 
 discrete measure $m$, that brings algebraic geometry into play.  To bring to bear this aspect fully, we recall that the work of Adams-Harnard-Previato \cite{adams:1988isospectral} is part of a sequence of results in the 1980s and early 1990s that translate classical integrable systems theory into the framework of complex algebraic geometry.  At the centre of this theme is the \emph{Hitchin system}, discovered in \cite{NJH:87} as an algebraically completely integrable Hamiltonian system defined on an enlargement of the cotangent bundle of the moduli space of stable holomorphic bundles on a fixed Riemann surface $X$ of genus $g\geq2$.  The entire system has a modular interpretation, namely as the moduli space of stable ``Higgs bundles'' on $X$, which consist of holomorphic vector bundles together with $1$-form-valued maps called ``Higgs fields''.  Higgs bundles themselves arise as solutions to a dimensional reduction of the self-dual Yang-Mills equations in four dimensions, as in \cite{NJH:86}.  The Hamiltonians for this system have a wonderfully explicit description in terms of characteristic data of the Higgs field.

Versions of the Hitchin system arise in lower genus, too. To accommodate surfaces $X$ with $g=0$ or $g=1$, one can make one of two (related) modifications.  On the one hand, the Higgs field can be allowed to take values in a line bundle other than the bundle of $1$-forms, leading to an integrable system studied in genus $0$ by P. Griffiths \cite{griffiths:lax} and A. Beauville \cite{AB:90} and in arbitrary genus by E. Markman \cite{EM:94}.  Retroactively, we refer to this as a \emph{twisted Hitchin system}, as it is formally a moduli space of Higgs bundles but with Higgs fields that have been twisted to take values in a line bundle of one's choosing.  The Hamiltonians have the same description as in the original Hitchin system but, generally speaking, the resulting integrable system is \emph{superintegrable}: it contains more Poisson-commuting Hamiltonians than are necessary.  In more geometric terms, the total space of the system is a torus fibration in which the base typically has dimension larger than that of the fibre.  The other modification is to puncture $X$ at finitely-many points and to allow the Higgs field to develop poles at these points, as in \cite{TSCH:90,BY:96,BB:04}.  Typically, one asks for the residues of the Higgs field at the poles to satisfy a certain Lie-theoretic condition, such as being semisimple.  This scheme has the virtue of preserving certain desirable properties of the original Hitchin system, such as the existence of a holomorphic symplectic form.  (In contrast, the twisted Hitchin systems generally fail to be globally symplectic and possess a family of degenerate Poisson structures that depend on a choice of divisor, as in \cite{EM:94}.)

A folklore belief is that every completely integrable system should be realizable as a Hitchin system of some kind, for some choice of Riemann surface $X$.  If true, this has the advantage of providing a systematic origin for spectral curves, namely as branched covers of the Riemann surface $X$.  A natural question is: when and how can a particular integrable system be identified with a Hitchin system?  In some sense, the original Hitchin systems for $g\geq2$ give rise to integrable systems of KdV / KP-type (for instance, \cite{HM:10}). Classically-known integrable systems tend to feature integrability in terms of elliptic integrals and hence involve the projective line $\mathbb P^1$ and elliptic curves.  For example, geodesic flow on the ellipsoid and Nahm's equations are twisted Hitchin systems on $X=\mathbb P^1$, as described in \cite{hitchin-segal-ward:is}.  Here, the Lax pair integrability can be expressed explicitly in terms of Laurent series in an affine chart on the $\mathbb P^1$.

In this article, we ask this question for the Calogero-Fran\c coise integrable system, which arises as a generalization of the Camassa-Holm dynamics.  We demonstrate how one can fit the CF integrable system into a twisted Hitchin system on $X=\mathbb P^1$, with Higgs fields taking values in the line bundle $\mathcal O(d)$, whose transition function is $z^d$ in the local coordinate.  One nice feature of this identification is that the theta divisor in the Jacobian of the spectral curve can be interpreted as a dynamical collision locus.  We demonstrate this explicitly in the case $d=2$.  Along the theta divisor, we also see a transition to a Hitchin system with poles of order $1$, capturing the singular dynamics algebro-geometrically.  Finally, we examine the inverse problem for CF from the point of view of continued fractions, in the sense of Stieltjes.

We hope that the mix of integrable systems theory and complex algebraic geometry in this article reflects some of the spirit of E. Previato's groundbreaking work over the past several decades.\\

\phantomsection
\addcontentsline{toc}{section}{Acknowledgements}
\subsection*{Acknowledgements} We are grateful to P. Boalch for useful discussions concerning Hitchin systems and Lax integrability.  The first and third named authors acknowledge the support of Discovery Grants from the Natural Sciences and Engineering Research Council of Canada (NSERC).  The second named author was supported by the NSERC USRA program.

\section{Calogero-Fran\c coise Hamiltonian System} \label{sec:CFgeneral}
The main reference for this section is \cite{beals-sattinger-szmigielski:CF}.  
We nevertheless present the main aspects of the setup to introduce notation and the main dynamical objects.  
F. Calogero and J.-P. Fran\c coise introduced in \cite{calogero-francoise:CF} 
a family of completely integrable Hamiltonian systems with Hamiltonian 
\begin{equation} \label{eq:CFHam}
H(x_1,\cdots, x_d, m_1,\cdots, m_d))=\tfrac12 \sum_{j,k=1}^d m_j m_k G_{\nu, \beta}(x_j-x_k)
\end{equation} 
where 
\begin{equation} \label{eq:Green}
G_{\nu,\beta}(x)=\frac{\beta_-}{2\nu}e^{-2\nu\abs{ x}}+\frac{\beta_+ }{2\nu}e^{2\nu\abs{x}}, 
\end{equation} 
and $\{x_1,\cdots, x_d\}$ and $\{m_1, \cdots, m_d\}$ are canonical positions and momenta respectively.  For future use we will define $\beta$ as a 
$2\times 2$ diagonal matrix $\textrm{diag}(\beta_-, \beta_+)$.  

The authors of \cite{calogero-francoise:CF} constructed explicitly $d$  Hamiltonians $\{H_j, j=1, \cdots, d\}$ and directly showed that they were in involution, i.e. $\{H_j, H_k\}=0$, with respect to the canonical Poisson bracket.  The special case $\beta_+=0, \beta_-=1$ 
was used as a motivating example and we briefly describe now this special case.  In 1993, R. Camassa and D. Holm \cite{camassa-holm} proposed what would turn out to be one of the most studied nonlinear partial differential equations of 
the last three decades, namely 
\begin{equation} \label{eq:CH}
m_t+2mu_x+um_x=0, \quad m=u-u_{xx}.  
\end{equation}
The equation was originally derived from the Hamiltonian for Euler's equation 
in the shallow water approximation.  One of the outstanding properties of 
the resulting equation is that it captures some aspects of "slope-steepening" and the breakdown of regularity of solutions, while at the same time it exhibits numerous intriguing 
aspects of Lax integrability, the connections to continued fractions of 
Stieltjes' type being one.  One feature that stands out in the present context is the existence of non-smooth solitons, dubbed {\sl peakons}.  These 
are obtained from the {\sl peakon ansatz}
\begin{equation} \label{eq:peakonAnstatz}
u(x,t)=\sum_{j=1}^d m_j(t)\, e^{-\abs{x-x_j(t)}}
\end{equation} 
for which $m$ becomes a finite sum of weighted Dirac measures 
\begin{equation} \label{eq:mpeakon}
m=2 \sum_{j=1}^d m_j(t) \delta_{x_j(t)}, 
\end{equation} 
and subsequently, upon substituting into \eqref{eq:CH}, one 
ends up with the systems of ODEs for positions $x_j$ and momenta $m_j$
\begin{equation}\label{eq:peakonODEs}
\dot x_j=u(x_j),  \qquad \dot m_j=-m_j\avg{u_x}(x_j), 
\end{equation} 
where $\avg{f}(x_j)$ denotes the arithmetic average of the right and left 
limits of $f$ at $x_j$.  Moreover, the peakon equation \eqref{eq:peakonODEs} 
is Hamiltonian with respect to the canonical Poisson bracket and 
Hamiltonian 
\begin{equation} \label{eq:HamCHpeakons} 
H(x_1,\dots,x_d,m_1,\dots,m_d)
  = \frac12 \sum_{i,j=1}^d m_i m_j e^{-\abs{x_i-x_j}}. 
  \end{equation} 
 The CH equation \eqref{eq:CH} has a Lax pair 
 \begin{equation}\label{eq:CHLaxpair}
 L=-D^2 +\frac14-\lambda m, \qquad P=(\frac{1}{2\lambda}+u)D+\frac{u_x}{2} 
 \end{equation} 
 whose compatibility indeed yields \eqref{eq:CH} (see, however, the discussion in \autoref{sec:CF}).   
 
It was shown in \cite{beals-sattinger-szmigielski:stieltjes, beals-sattinger-szmigielski:moment} that peakon equations can be explicitly integrated using classical results of analysis including the Stieltjes' continued fractions and the moment problem.  The CH peakon Hamiltonian 
 \eqref{eq:HamCHpeakons} was the starting point for the analysis in \cite{calogero-francoise:CF} and clearly the Hamiltonian  \eqref{eq:CFHam}
 is a natural generalization of \eqref{eq:HamCHpeakons}.  
 The fact that this generalization fits in with the CH equation \eqref{eq:CH} 
 was proven in \cite{beals-sattinger-szmigielski:CF}.  We will review the analysis based on that paper with due attention to the emergence of 
 a spectral curve and associated Riemann surface, both of which were absent from the analysis in \cite{calogero-francoise:CF} and were only in the background in \cite{beals-sattinger-szmigielski:CF}.

\section{CF flows: the Peakon Side}\label{sec:CF}
Given a measure $m\in \mathcal{M}(\R)$ and $\lambda \in \C$ we form the operator pencil 
\begin{equation} \label{eq:Llambda} 
L(\lambda)=D^2-\nu^2 -2 \nu \lambda  m.   
 \end{equation} 
We introduce another operator 
\begin{equation}
B(\lambda)=\big(\frac{1}{2\nu \lambda}-u\big)D+\tfrac12 u_x, 
\end{equation} 
and observe that 
\begin{equation}
[B(\lambda),L(\lambda)]  =2\nu\lambda (2u_xm +mu_x)-(m+\tfrac12 u_{xx}-2\nu^2 u)_x +\text{ mod $L(\lambda)$}, 
\end{equation}
where $\text{ mod $L(\lambda)$}$ means that this part vanishes on the 
kernel of $L(\lambda)$, from which 
we conclude that the operator equation (valid identically in $\lambda$ ) 
\begin{equation} \label{eq:gLax}
\dot L(\lambda)=[B(\lambda), L(\lambda)] +\text{ mod $L(\lambda)$}
 \end{equation} 
 implies 
 \begin{equation} \label{eq:CHF}
 m_t=(mD+Dm)u, \qquad [D(4\nu^2 -D^2)] u=2m_x. 
 \end{equation}
 \begin{remark} \label{rem:Manakov}
 In his 1976 paper S. Manakov \cite{manakov:triples}
 introduced a generalization of the Lax formalism.  The main 
 new aspect amounted to replacing 
 the standard Lax pair formulation $\dot L=[B,L]$ with 
 what would become known as a {\sl Manakov triple formulation} 
 by postulating the existence of another operator, say, $C$ such 
 that a generalized Lax equation $\dot L=[B,L]+CL$ holds.  Needless to say 
 the CH equation and in fact many other integrable equations have since been identified as satisfying some form of the Manakov triple formalism. 
 In the CH case, $C=-2u_x$.  

 \end{remark} 
 
 The main reason for reviewing this 
 derivation, despite its obvious affinity with the CH Lax pair \eqref{eq:CHLaxpair}, is to emphasize the second equation determining how 
 the potential $u$ is related to the measure $m$.  Upon one integration 
 we get $(4\nu^2 -D^2)u=2m+c$ for some $c$, resulting  in 
 shifting $u\rightarrow u+\frac{c}{4\nu^2}$ which can easily be absorbed by the Gallilean transformation $t=t', x=x'+ct$.  Thus we could assume that $c=0$ and that's precisely what was done in \cite{beals-sattinger-szmigielski:CF}.  
 Yet, in this paper we will choose a particular constant $c$, specified later, to fit more naturally with other developments.  
 We note that $\tfrac12 G_{\nu, \beta}$ in \eqref{eq:Green} is the most general even fundamental solution of $(4\nu^2 -D^2)$, provided $\beta_--\beta_+=1$.   Moreover, any rescaling of $G_{\nu,\beta}$ results in a rescaling of $H(x_1,\dots,x_d,m_1,\dots,m_d)$ which, in turn, can be 
 compensated by changing the time scale.  So the assumption $\beta_--\beta_+=1$ causes no loss of generality and will be in force for the remainder of the paper.
 
 We will concentrate from this point onward on the peakon sector whose definition we record to fix notation
 \begin{equation} \label{eq:um}
 u(x)=\sum_{j=1}^d m_j G_{\nu,\beta}(x-x_j)-C, \quad \qquad m=\sum_{j=1}^d m_j
 \delta_{x_j}. 
 \end{equation} 
Moreover, we assign the labels to positions in an increasing order $x_1<x_2<\cdots<x_d$.  In the CH peakon case ($\beta_+=0$), 
 the positivity of masses (momenta) $m_j$ is crucial for 
 the global existence of solutions \cite{beals-sattinger-szmigielski:moment} 
 so we make the same assumption that $m_j>0$ 
 until further notice.  We note that the evolution equation 
 \eqref{eq:CHF} has to be interpreted in the sense of distributions.  
 In particular 
 \begin{equation*} 
 m_t=\sum_{j=1}^d \big(\dot m_j \delta_{x_j}- \dot x_j m_j \delta_{x_j}^{(1)}\big)
 \end{equation*} 
 while the term $(mD+Dm)u$ has to be properly defined since the 
 singular supports of $m$ and $u$ coincide.  The regularization consistent 
 with Lax integrability turns out to be to assign the average value to $u_x$ at any 
 point $x_j$.  Thus $u_x \delta _{x_j}\stackrel{def}{=}\avg{u_x}(x_j) \delta_{x_j}$.  
 This point is explained in \cite{beals-sattinger-szmigielski:moment}.  
 The resulting CF Hamiltionian system has the same form as 
 \eqref{eq:peakonODEs} except for the definition of $u$ which is now 
 given by equation \eqref{eq:um}.  
 
 The presence of masses at $x_j$ 
 divides $\R$ into intervals $I_j=(x_{j-1}, x_j)$ with the proviso that $x_0=-\infty, \, x_{d+1}=+\infty$.  We will need the asymptotic behaviour of 
 $u$ on $I_1, I_{d+1}$ which follows trivially from the definition of $u$.  
 \begin{lemma} \label{lem:ump}
 \mbox{}\\*
 Let us set $M_\pm=\sum_{j=1}^d m_j e^{\pm2\nu x_j}$.  
 \mbox{}\\ 
 Then the asymptotic behaviour of $u$ is given by 
 \begin{align} 
 u(x)&=\frac{\beta_-}{2\nu}  M_-e^{2\nu x}+\frac{\beta_+}{2\nu}M_+e^{-2\nu x}-C, \qquad \qquad x\in I_1, \label{eq:uminus}\\
 \notag\\
 u(x)&=\frac{\beta_-}{2\nu} M_+ e^{-2\nu x}+\frac{\beta_+}{2\nu}M_-e^{2\nu x}-C, \qquad \qquad x\in I_{d+1}. \label{eq:uplus}
 \end{align}
 \end{lemma}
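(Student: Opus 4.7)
The plan is to compute $u(x)$ directly on the two unbounded intervals $I_1$ and $I_{d+1}$, exploiting the fact that on each of these intervals the sign of $x-x_j$ is the same for every $j$, so the absolute value in the Green's function $G_{\nu,\beta}$ can be stripped uniformly across the sum in \eqref{eq:um}.

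First I would take $x\in I_1=(-\infty,x_1)$. Since the positions are ordered $x_1<x_2<\cdots<x_d$, we have $x<x_j$ for every $j$, hence $\abs{x-x_j}=x_j-x$. Substituting this into \eqref{eq:Green} and separating the $x$-dependent and $x_j$-dependent exponentials gives
\begin{equation*}
G_{\nu,\beta}(x-x_j)=\frac{\beta_-}{2\nu}e^{2\nu x}e^{-2\nu x_j}+\frac{\beta_+}{2\nu}e^{-2\nu x}e^{2\nu x_j}.
\end{equation*}
Multiplying by $m_j$, summing over $j$, pulling $e^{\pm 2\nu x}$ out of the sum, recognizing the remaining sums as $M_-=\sum_j m_j e^{-2\nu x_j}$ and $M_+=\sum_j m_j e^{2\nu x_j}$ respectively, and finally subtracting $C$ yields exactly \eqref{eq:uminus}.

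For $x\in I_{d+1}=(x_d,+\infty)$ the same ordering now gives $x>x_j$ for every $j$, so $\abs{x-x_j}=x-x_j$. The analogous expansion of $G_{\nu,\beta}(x-x_j)$ produces the same two exponentials in $x$ but with the signs on the $x_j$-exponents flipped; after the same factoring this interchanges the pairing of $e^{\pm 2\nu x}$ with $M_\pm$, producing \eqref{eq:uplus}.

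The computation is entirely elementary and presents no genuine obstacle; the only thing to watch is the bookkeeping of which sign of $\abs{x-x_j}$ pairs $e^{\pm 2\nu x}$ with $M_\pm$ versus $M_\mp$. Once that pairing is checked carefully on $I_1$ and on $I_{d+1}$, both asymptotic formulae follow by inspection.
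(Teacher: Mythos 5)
Your proposal is correct and is exactly the computation the paper has in mind; the paper gives no explicit proof, stating only that the lemma ``follows trivially from the definition of $u$,'' and your direct expansion of $G_{\nu,\beta}(x-x_j)$ on each unbounded interval, with the sign of $\abs{x-x_j}$ resolved uniformly in $j$, is that trivial verification carried out. The pairing of $e^{\pm2\nu x}$ with $M_\mp$ on $I_1$ and with $M_\pm$ on $I_{d+1}$ checks out against \eqref{eq:uminus} and \eqref{eq:uplus}.
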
 
 \begin{remark}
 The main difference between the CF scenario and the original peakon CH 
 case is the presence of a non-vanishing, actually 
 exponentially growing,  tail at $\pm \infty$.  
 Eventually, this has a real impact on the type of algebraic curve to which the the problem 
 is associated.  
 \end{remark}

 \subsection{Forward Problem} 
 We concentrate now on solving $L(\lambda)\Phi=0$ for $L(\lambda), m$ given by \eqref{eq:Llambda}, \eqref{eq:um} respectively.  The mathematics 
 involved is elementary, but one gets an interesting insight into the emergence of an underlying finite dimensional dynamical system.  
 The singular support of $m$ consists of positions $x_1<x_2<\cdots<x_d$.  
 On the complement, that is on the union of open intervals $\bigcup_{j=1}^{d+1} I_j$, we are solving 
 \begin{equation*}
 (D^2-\nu^2)\Phi=0.  
 \end{equation*} 
 Let us denote by $\Phi_j$ the restriction of $\Phi$ to $I_j$.  
 Then on each $I_j$, we have 
 \begin{equation*} 
 \Phi_j=a_je^{\nu x}+b_j e^{-\nu x},  \qquad 1\leq j\leq d+1;
 \end{equation*} 
 however, while crossing the right endpoint $x_j$, we have 
 \begin{equation*} 
 \Phi_j(x_{j})=\Phi_{j+1}(x_{j}), \qquad [D\Phi](x_{j})=(D\Phi_{j+1}-D\Phi_j)(x_{j})=2\nu \lambda \Phi_{j}(x_{j}).  
 \end{equation*} 
 If we use as a basis $\{e^{\nu x}, e^{-\nu x}\}$, then 
 $\Phi_j$ can be identified with $\begin{bmatrix} a_j\\ b_j \end{bmatrix}$ 
 and the last equation can be written 
 \begin{equation} \label{eq:transitionj} 
 \Phi_{j+1}=T_j \Phi_j, \qquad \qquad T_j=I+\lambda m_j\begin{bmatrix}1&e^{-2\nu x_j}\\ -e^{2\nu x_j}& -1 \end{bmatrix}, 
 \end{equation} 
 where $I$ denotes the $2\times 2$ identity matrix.

 We now define the transition matrix 
 \begin{equation} \label{eq:transition}
 T=T_d\dots T_1, 
 \end{equation} 
 which maps $\Phi_1\rightarrow \Phi_{d+1}$ .  In the next step we want to 
  determine the time evolution of $T$.  
 \begin{lemma} \label{lem:preLax}
 Let 
 \begin{align*} 
 B_-&=\begin{bmatrix} \frac{1}{2\lambda}+C& \beta_-M_-\\\\-\beta_+M_+ & -\frac{1}{2\lambda}-C \end{bmatrix},  \\\\
 B_+&=\begin{bmatrix} \frac{1}{2\lambda}+C& \beta_+M_-\\\\-\beta_-M_+ & -\frac{1}{2\lambda} -C\end{bmatrix}.
\end{align*} 
Then 
\begin{equation}
\dot T=B_+T-TB_-. 
\end{equation} 
 
 \end{lemma}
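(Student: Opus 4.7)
The plan is to reduce $\dot T = B_+ T - T B_-$ to a local ``discrete zero-curvature'' identity at each jump and then telescope. Differentiating $T = T_d \cdots T_1$ by the product rule gives
\begin{equation*}
  \dot T = \sum_{j=1}^{d} T_d \cdots T_{j+1}\,\dot T_j\,T_{j-1}\cdots T_1,
\end{equation*}
so it suffices to produce $2\times 2$ matrices $A_0, A_1, \ldots, A_d$ with $A_0 = B_-$ and $A_d = B_+$ satisfying $\dot T_j = A_j T_j - T_j A_{j-1}$ for $j = 1, \ldots, d$; the sum then telescopes immediately to $A_d T - T A_0$.

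The matrix $A_j$ will be the continuous time-evolution matrix on the interval $I_{j+1}$. Since $m$ vanishes on $I_{j+1}$, the equation $L(\lambda)\Phi=0$ reduces there to $(D^2 - \nu^2)\Phi = 0$, with solutions spanned by $e^{\pm\nu x}$. Splitting $u(x) = \sum_k m_k G_{\nu,\beta}(x - x_k) - C$ on $I_{j+1}$ according to whether $x_k < x$ or $x_k > x$, one obtains $u = \tfrac{1}{2\nu}(P_j e^{-2\nu x} + Q_j e^{2\nu x}) - C$, where $P_j = \beta_- \sum_{k\le j} m_k e^{2\nu x_k} + \beta_+ \sum_{k>j} m_k e^{2\nu x_k}$ and $Q_j = \beta_+ \sum_{k\le j} m_k e^{-2\nu x_k} + \beta_- \sum_{k>j} m_k e^{-2\nu x_k}$. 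Applying the differential operator $B(\lambda) = (\tfrac{1}{2\nu\lambda} - u)D + \tfrac12 u_x$ to $e^{\pm\nu x}$ produces $A_j$ with the same shape as $B_\pm$ but with $P_j, Q_j$ in place of the boundary data; Lemma \ref{lem:ump} identifies the extreme cases as $A_0 = B_-$ and $A_d = B_+$.

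To verify the local identity $\dot T_j = A_j T_j - T_j A_{j-1}$, write $T_j = I + \lambda m_j N_j$ (so $N_j^2 = 0$) and expand both sides as polynomials in $\lambda$. Using $\beta_- - \beta_+ = 1$, the coefficient of $\lambda^0$ reduces to $A_j - A_{j-1} = -\tfrac{m_j}{2}[\mathrm{diag}(1,-1), N_j]$, which is manifest. The coefficient of $\lambda^1$, once one notices the pleasant identity $(A_j - A_{j-1})N_j = m_j N_j$, becomes equivalent to the peakon equations of motion $\dot x_j = u(x_j)$ and $\dot m_j = -m_j \avg{u_x}(x_j)$, upon substituting the continuity of $u$ at $x_j$ and the relation $\avg{u_x}(x_j) = u_x(x_j^-) - m_j$. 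The main obstacle, accordingly, is the bookkeeping for this last step---in particular keeping track of the $-m_j$ offset (which reflects the ``mod $L(\lambda)$'' Manakov-type correction of Remark \ref{rem:Manakov}) and matching it against the off-diagonal entries of $A_j N_j - N_j A_{j-1}$.
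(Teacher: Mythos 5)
Your proof is correct, but it takes a genuinely different route from the paper's. The paper argues globally: from the generalized Lax equation \eqref{eq:gLax} it deduces that the fundamental solution normalized to $\hat\Phi_1=I$ on $I_1$ satisfies $(\partial_t-B(\lambda))\hat\Phi=\hat\Phi\,C(t)$ for an $x$-independent matrix $C(t)$; writing $\hat\Phi_{d+1}=T\hat\Phi_1$ and comparing the two ends then yields $\dot T=B_+T-TB_-$ in one step, the only explicit computation being the matrix of $B(\lambda)$ on the two unbounded intervals, read off from \autoref{lem:ump}. You instead establish the local discrete zero-curvature identities $\dot T_j=A_jT_j-T_jA_{j-1}$ at every site and telescope. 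This costs more bookkeeping --- you must compute $A_j$ on each interior interval (your $P_j,Q_j$ correctly interpolate between the two cases of \autoref{lem:ump}, and the $e^{\pm 3\nu x}$ terms do cancel when $B(\lambda)$ is applied to $e^{\pm\nu x}$) and then match coefficients of $\lambda^0$ and $\lambda^1$; I checked that these reduce, as you say, to $\beta_--\beta_+=1$ together with $P_j-P_{j-1}=m_je^{2\nu x_j}$, $Q_j-Q_{j-1}=-m_je^{-2\nu x_j}$, and to the peakon ODEs \eqref{eq:peakonODEs} respectively (your identities $N_j^2=0$, $(A_j-A_{j-1})N_j=m_jN_j$, and $\avg{u_x}(x_j)=u_x(x_j^-)-m_j$ all hold). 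What your route buys is logical independence from the Manakov-triple formalism: the paper's argument presupposes that \eqref{eq:gLax} holds distributionally for the peakon ansatz, which rests on the regularization $u_x\delta_{x_j}=\avg{u_x}(x_j)\delta_{x_j}$, whereas you take the ODEs $\dot x_j=u(x_j)$, $\dot m_j=-m_j\avg{u_x}(x_j)$ as the starting point and derive the Lax evolution of $T$ from them directly, exhibiting the intermediate matrices $A_j$ on all intervals along the way. The paper's argument is shorter and more conceptual; yours is more self-contained.
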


 \begin{proof} 
 The proof can be found in \cite{beals-sattinger-szmigielski:CF} but for the sake of completeness and to give the reader a sense of the origin of the geometric 
 underpinnings of the CF system we present an economical version of the argument.  
 We start by observing that the generalized Lax \autoref{eq:gLax} 
 implies 
 that 
 \begin{equation}\label{eq:ZCC}
L(\lambda)(\frac{\partial}{\partial t}-B(\lambda))\Phi=(\frac{\partial}{\partial t}-B(\lambda))L(\lambda) \Phi=0.  
\end{equation}
 Suppose we denote by $\hat \Phi$ the fundamental solution 
 normalized to be $\hat \Phi_1=I$ in $I_1$ (corresponding to $e^{\nu x}$ and $e^{-\nu x}$ as linearly independent solutions).  Then \eqref{eq:ZCC} implies that 
 \begin{equation}
(\frac{\partial}{\partial t}-B(\lambda))\hat \Phi=\hat \Phi C(t)
\end{equation}
where $C(t)$ is a $2\times 2$ matrix whose entries do not depend on $x$.  
Let us denote by $B_j$ the restriction of $B(\lambda)$ to $I_j$.   Then $\hat \Phi_{j}=T_{j-1}\cdots T_1\hat \Phi_1$, and 
$$(\frac{\partial}{\partial t}-B_j)\hat \Phi_j=\hat \Phi_j C(t)=T_{j-1}\cdots T_1\hat \Phi_1C(t)=T_{j-1}\cdots T_1(\frac{\partial}{\partial t}-B_1)\Phi_1$$
Hence 
$(\frac{\partial}{\partial t}-B_j)\hat \Phi_j=T_{j-1}\cdots T_1(\frac{\partial}{\partial t}-B_1)\hat \Phi_1$.  In particular, for $j=d+1$ and denoting 
$B_{d+1}=B_+$ and $B_1=B_-$, we obtain 
\begin{equation}
(\frac{\partial}{\partial t}-B_+)T=-TB_1, 
\end{equation}
since $\hat \Phi_1=I$.  Finally, the computation of the matrix of $B_-$ and 
$B_+$ in the basis $\{e^{\nu x}, e^{-\nu x}\}$ follows readily from \autoref{lem:ump}. 
\end{proof}

We observe that $B_-$ and $B_+$ are very closely related: they differ only 
by the placement of $\beta_-$ and $\beta_+$, which are interchanged by conjugating with the diagonal matrix $\beta$, namely, 
\begin{equation} 
B_-\beta=\beta B_+
\end{equation} 
This leads us to an interesting corollary.  
\begin{corollary} \label{cor:trueLax}
Let $A(z)=z^d T(\lambda=\frac{1}{z})\beta$ and set $B(z)=B_+(\lambda=\frac{1}{z})$.  Then $A(z)$ satisfies the Lax equation
\begin{equation}\label{eq:trueLax}
\dot A(z)=[B(z),A(z)].  
\end{equation}
\end{corollary}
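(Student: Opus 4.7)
The plan is to derive this directly from \autoref{lem:preLax} together with the intertwining identity $B_-\beta=\beta B_+$ noted just before the statement, and then to reinterpret the result after the change of spectral parameter $z=1/\lambda$.

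First, I would take the equation $\dot T = B_+ T - T B_-$ from \autoref{lem:preLax} and multiply it on the right by the constant (in $t$) diagonal matrix $\beta$. This gives
\begin{equation*}
\dot T\, \beta = B_+ T\beta - T B_-\beta.
\end{equation*}
Using $B_-\beta = \beta B_+$ on the last term converts this to
\begin{equation*}
(T\beta)^{\cdot} = B_+ (T\beta) - (T\beta) B_+ = [B_+, T\beta].
\end{equation*}
So already, at the level of $T\beta$, one has a genuine Lax equation with the single matrix $B_+$.

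Next I would introduce the new spectral parameter $z=1/\lambda$. Since $T(\lambda)=T_d(\lambda)\cdots T_1(\lambda)$ is a matrix polynomial of degree $d$ in $\lambda$ (each $T_j$ is affine in $\lambda$ by \eqref{eq:transitionj}), the quantity $z^d T(1/z)$ is a matrix polynomial of degree $d$ in $z$. Multiplying the displayed Lax equation by $z^d$, which is independent of $t$, and recalling $B(z)=B_+(1/z)$ and $A(z)=z^d T(1/z)\beta$, yields
\begin{equation*}
\dot A(z) = z^d [B_+, T\beta]\bigr|_{\lambda=1/z} = [B(z),\, A(z)],
\end{equation*}
which is precisely \eqref{eq:trueLax}.

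There is essentially no obstacle: the entire proof is two lines, the substantive content having already been packaged in \autoref{lem:preLax} and the intertwining $B_-\beta=\beta B_+$. The only point worth flagging is a conceptual one, namely that the $z^d$ factor is exactly what is needed to convert the rational Lax pair (with a simple pole at $\lambda=0$, or equivalently at $z=\infty$) into a polynomial one suitable for the Hitchin-type interpretation later; the time-independence of this factor is what guarantees the commutator form on the right-hand side. I would close by remarking that \autoref{cor:trueLax} is the natural starting point for viewing $A(z)\,dz$ (or, more accurately, the data $(A(z), \beta)$) as a Higgs field on $\mathbb{P}^1$ twisted by $\mathcal O(d)$, which is the bridge to the algebro-geometric picture advertised in the introduction.
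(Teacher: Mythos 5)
Your proof is correct and follows exactly the route the paper intends: the corollary is stated immediately after the intertwining relation $B_-\beta=\beta B_+$ precisely so that right-multiplying the equation of \autoref{lem:preLax} by the time-independent matrix $\beta$ and then by $z^d$ yields the commutator form. The paper leaves this two-line computation implicit, and your write-up supplies it faithfully.
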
 
We observe that now 
both $A(z)$ and $B(z)$ are matrix valued polynomials 
in $z$ of degrees $d$ and $1$ respectively.  Clearly, we 
can associate to $A(z)$  a  {\sl spectral curve} \begin{center} 
\boxed{
C(w,z)=\det{\big(w I -A(z)\big)}=0,} 
\end{center} 
or more succinctly 
\begin{equation} 
C(w,z)=\{(w,z): 
w^2=\textrm{tr} (A(z)) w-z^{2d}\det{\beta}\}. 
\end{equation}
We remark that the compactification of the the affine curve $C(w,z)=0$, which we will denote by $Y$, is a $2$-fold cover of $\mathbb{P}^1$ on which $w$ is single valued.

\subsection{Higgs Fields and Properties of $A$ and $B$}
The first observation about $A$ is that it differs only 
by a multiplication by $z^d \beta$ from $T(\lambda=1/z)$ in \autoref{eq:transition}.  
Thus $A$ is a matrix valued polynomial of degree $d$ in $z$, 
$A=\sum_{j=0}^d A_j z^j $.  By factoring $z^d$ this 
defines a holomorphic function around $\infty$ with a local parameter 
$\tilde z =\frac{1}{z}$.

As a matrix, each $A(z)$ acts by left multiplication of course on the vector space $V=\C^2$.  The passage from linear algebra to geometry occurs by viewing $z\in\mathbb P^1$ as parametrizing a family of such vector spaces, and so the whole object $A$ acts by left multiplication on the vector bundle $E=\mathbb P^1\times\C^2$.  This is a rank-$2$ vector bundle with trivial holomorphic structure.  Typically, the isomorphism class of holomorphic line bundles on $\mathbb P^1$ with transition function $z^n$ is denoted by $\mathcal O(n)$.  The first Chern class of $\mathcal O(n)$, identified with an integer via Poincar\'e duality, is $n$.  This integer is called the \emph{degree} of the line bundle and we write $\deg\mathcal O(n)=n$.  The degree tells us the number of times that a generic holomorphic section vanishes.  The now-classical splitting theorem of Birkhoff and Grothendieck says that every holomorphic bundle on $\mathbb P^1$ decomposes uniquely (up to ordering) as a sum of holomorphic line bundles.  In the case of our $E$, this is $E=\mathcal O\oplus\mathcal O$, where $\mathcal O:=\mathcal O(0)\cong\mathbb{P}^1\times\C$ is the trivial line bundle (or ``structure sheaf'') on $\mathbb P^1$.  The degree is additive with respect to both tensor products and direct sums of line bundles.  Hence, $\deg E=0$.\\

In this framework, $A$ is a holomorphic map from $E$ to itself, tensored by $\mathcal O(d)$:\begin{center} \boxed{
A \in H^0(\mathbb{P}^1,\mbox{End}(E)\otimes\mathcal{O}(d))}
\end{center}

The data of $A$ is the \emph{Higgs field} in our set-up; together, $(E,A)$ is the \emph{Higgs bundle}.  We refer to the dimension of the fibre of $E$ as the \emph{rank} of $E$.  In this case, the rank of $E$ is $r=2$.  We also ought to remark that the pair $(E,\phi)$ is a ``twisted'' Higgs bundle relative to the formulation of Hitchin \cite{NJH:86}, as the Higgs bundles coming from gauge theory would be $\mathcal O(-2)$-valued on $\mathbb P^1$ while our $d$ is strictly positive.

We will now compute the coefficients $A_j$ for the Higgs field in terms of the 
original data $\{x_1, \cdots, x_d, m_1, \cdots, m_d\}$.  This step is not 
relevant to the geometry of the problem but it is crucial if one actually wants to solve the original peakon equation 
$\dot x_j =u(x_j), \quad \dot m_j =-m_j \avg{u_x}(x_j)$.  

Recall that 
\begin{equation*} 
m=\sum_{j=1}^d m_j \delta_{x_j}, \quad x_1<x_2<\cdots<x_d, 
\end{equation*}
and (see \eqref{eq:transitionj})  
\begin{equation*}
T_j=I+\lambda m_j\begin{bmatrix} 1 & e^{{ -}2\nu x_j} \\
-e^{2\nu x_j}& -1  \end{bmatrix}.  
\end{equation*}

The first elementary observation is that if we set 
 $X_j=\begin{bmatrix} 1&e^{-2\nu x_j}\\ -e^{2\nu x_j}& -1 \end{bmatrix}$ then 
\begin{equation}\label{eq:Xj}
X_j=\begin{bmatrix} 1\\-e^{2\nu x_j} \end{bmatrix} \begin{bmatrix} 
1& e^{-2\nu x_j} \end{bmatrix}. 
\end{equation} 
 
 \begin{definition}
  The binomial coefficient $\binom{S}{j}$
  denotes the collection of $j$-element subsets
  of the set~$S$,
  and $[d]$ denotes the integer interval $\{ 1,2,3, \dots, d \}$.
  Moreover, the elements of a set
  $I \in \binom{[d]}{j}$ are labeled in increasing order:
  $I = \{ i_1 < i_2 < \dots < i_j \}$.
  Finally, given a collection of matrices $\{a_1, a_2, \cdots, a_d\} $
  we denote the ordered product of matrices labeled by the multi-index 
  $I$ as $a_I\stackrel{def}{=} a_{i_j}a_{i_{j-1}}\cdots a_{i_1}$.  
 
\end{definition}
\begin{lemma} 
Given a multi-index $I \in \binom{[d]}{j}$ and the set of 
matrices $\{X_1, X_2, \cdots, X_d\}$ as in \autoref{eq:Xj} then 
\begin{equation} 
X_I=\Big(\prod_{k=1}^{j-1} \big( 1-e^{-2\nu(x_{i_{k+1}}-x_{i_k})}\big) 
\Big) \, \begin{bmatrix} 1 & e^{-2\nu x_{i_1}}\\-e^{2\nu x_{i_j}}& -e^{2\nu(x_{i_j}-x_{i_1})}\end{bmatrix} 
\end{equation} 
with the proviso that the empty product is taken to be $1$ when $j=1$.  
\end{lemma}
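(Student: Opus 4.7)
The plan is to exploit the rank-one factorization of each $X_j$ displayed in \autoref{eq:Xj} and then let the product telescope, since a chain of outer products collapses to a scalar times a single outer product.

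More precisely, write $X_j = u_j v_j^{T}$ where $u_j = \begin{bmatrix} 1 \\ -e^{2\nu x_j} \end{bmatrix}$ and $v_j^{T} = \begin{bmatrix} 1 & e^{-2\nu x_j} \end{bmatrix}$. By the ordering convention of the definition,
\begin{equation*}
X_I \;=\; X_{i_j} X_{i_{j-1}} \cdots X_{i_1} \;=\; u_{i_j} v_{i_j}^{T} u_{i_{j-1}} v_{i_{j-1}}^{T} \cdots u_{i_1} v_{i_1}^{T}.
\end{equation*}
All the interior factors $v_{i_{k+1}}^{T} u_{i_k}$ are scalars, so they can be pulled out of the product and the expression reduces to
\begin{equation*}
X_I \;=\; \Bigl( \prod_{k=1}^{j-1} v_{i_{k+1}}^{T} u_{i_k} \Bigr)\, u_{i_j} v_{i_1}^{T}.
\end{equation*}

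The rest is a direct calculation of two objects. First, the scalar inner product
\begin{equation*}
v_{i_{k+1}}^{T} u_{i_k} \;=\; 1 - e^{-2\nu x_{i_{k+1}}} e^{2\nu x_{i_k}} \;=\; 1 - e^{-2\nu(x_{i_{k+1}} - x_{i_k})},
\end{equation*}
which reproduces the scalar prefactor in the claimed formula. Second, the outer product
\begin{equation*}
u_{i_j} v_{i_1}^{T} \;=\; \begin{bmatrix} 1 \\ -e^{2\nu x_{i_j}} \end{bmatrix} \begin{bmatrix} 1 & e^{-2\nu x_{i_1}} \end{bmatrix} \;=\; \begin{bmatrix} 1 & e^{-2\nu x_{i_1}} \\ -e^{2\nu x_{i_j}} & -e^{2\nu(x_{i_j} - x_{i_1})} \end{bmatrix},
\end{equation*}
which is exactly the matrix on the right-hand side of the lemma. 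The $j=1$ case simply says $X_I = X_{i_1} = u_{i_1} v_{i_1}^{T}$, matching the empty-product convention.

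There is no real obstacle here: the only substantive input is the rank-one factorization \eqref{eq:Xj}, and the essential feature being used is that $v_k^{T} u_k = 1 - 1 = 0$ is irrelevant while $v_{k+1}^{T} u_k$ produces the nontrivial difference $1 - e^{-2\nu(x_{i_{k+1}}-x_{i_k})}$. If one prefers, the same result can be obtained by induction on $j$, with the inductive step $X_I = X_{i_j} X_{I \setminus \{i_j\}}$ absorbing one more scalar factor $v_{i_j}^{T} u_{i_{j-1}}$; but the telescoping argument above is both shorter and makes transparent why the prefactor is a product of $j-1$ consecutive differences.
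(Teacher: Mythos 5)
Your proof is correct and follows essentially the same route as the paper: both arguments rest on the rank-one factorization $X_j = u_j v_j^{T}$ from \autoref{eq:Xj}, with the paper verifying the two-factor case $X_jX_i=(1-e^{-2\nu(x_j-x_i)})u_jv_i^{T}$ and then inducting, while you simply telescope the whole chain at once. The difference is purely presentational.
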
 
\begin{proof} 
It suffices to observe that 
\begin{equation*}
\begin{split} X_jX_i=
\begin{bmatrix} 1\\-e^{2\nu x_j} \end{bmatrix} \begin{bmatrix} 
1& e^{-2\nu x_j} \end{bmatrix} \begin{bmatrix} 1\\-e^{2\nu x_i} \end{bmatrix} \begin{bmatrix} 
1& e^{-2\nu x_i} \end{bmatrix}=\\(1-e^{-2\nu (x_j-x_i)})\begin{bmatrix} 1\\-e^{2\nu x_j} \end{bmatrix} \begin{bmatrix} 
1& e^{-2\nu x_i} \end{bmatrix}, \qquad i<j, 
\end{split} 
\end{equation*} 
and then proceed by induction on the number of terms.  
\end{proof}

\vspace{1cm} 

Given a multi-index $I \in \binom{[d]}{j}$ we denote 
\begin{equation} \label{eq:fi}
\prod_{k=1}^{j-1} \big( 1-e^{-2\nu(x_{i_{k+1}}-x_{i_k})}\big)\stackrel{def}{=}f_I. 
\end{equation}

\begin{theorem} \label{thm:Astructure}
\begin{equation} 
T=I + 
\sum_{j=1}^d\lambda^{j} \sum_{
I \in \binom{[d]}{j}}  f_I m_I \begin{bmatrix} 1 & e^{-2\nu x_{i_1}}\\-e^{2\nu x_{i_j}}& -e^{2\nu (x_{i_j}-x_{i_1})}\end{bmatrix} 
\end{equation}
In particular,
\begin{align} 
T&=\lambda^{d} f_{[d]}m_{[d]}\begin{bmatrix}
1& e^{-2\nu x_1}\\
-e^{2\nu x_d}&-e^{2\nu(x_d-x_1)} \end{bmatrix} +\textrm{O}(\lambda^{d-1}), \qquad & \lambda \rightarrow \infty,  \\
A(z)&=\beta z^d +\sum_{j=1}^d z^{d-j} \sum_{
I \in \binom{[d]}{j}}  f_I m_I \begin{bmatrix} \beta_- & \beta_+e^{-2x_{i_1}}\\-\beta_-e^{2x_{i_j}}& -\beta_+e^{2(x_{i_j}-x_{i_1})}\end{bmatrix}, \\
\textrm{tr}A(z)&= f_{[d]}m_{[d]}(\beta_--\beta_+e^{2\nu(x_d-x_1)})+ \textrm{O}(z), \qquad &z \rightarrow 0,  \\
A(z)&=\beta z^d  +\begin{bmatrix} \beta_- M&\beta_+ M_-\\ -\beta_-M_+&-\beta_+M \end{bmatrix}z^{d-1} + \textrm{O}(z^{d-2}), \qquad &z \rightarrow \infty, 
\end{align}
where  $M=\sum_{j=1}^d m_j$ is the total mass (momentum) and 
$M_{\pm}=\sum_{j=1}^d e^{\pm 2\nu x_j} m_j$ (see \autoref{lem:ump}).  
\end{theorem}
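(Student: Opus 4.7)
The plan is to compute $T = T_d T_{d-1} \cdots T_1$ by expanding the product directly, group terms by the cardinality of the set of non-identity factors chosen, and invoke the preceding lemma to evaluate each ordered product $X_I$ in closed form. The four asymptotic statements will then drop out by inspection of the resulting sum.

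Step 1 (combinatorial expansion). Since $T_k = I + \lambda m_k X_k$, distributivity gives
\begin{equation*}
T = \prod_{k=d}^{1}(I + \lambda m_k X_k) = \sum_{I \subseteq [d]} \lambda^{|I|} m_I X_I,
\end{equation*}
where $m_I = \prod_{i \in I} m_i$ and $X_I = X_{i_{|I|}} X_{i_{|I|-1}} \cdots X_{i_1}$; the ordering on the right is inherited from the fact that $T_d$ sits leftmost in $T$ and that distributivity preserves the left-to-right order of the chosen non-identity factors. Separating out the empty summand $I = \emptyset$ (which contributes the identity $I$) and grouping by $j = |I|$ reduces the formula for $T$ to the evaluation of $X_I$ for a generic $j$-element multi-index; that evaluation is precisely the content of the preceding lemma, producing the factor $f_I$ and the rank-one $2\times 2$ matrix displayed in the theorem statement. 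This establishes the first display.

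Step 2 (asymptotics). The behaviour of $T$ as $\lambda \to \infty$ isolates the unique top-degree term $j = d$, with $I = [d]$, $i_1 = 1$, and $i_j = d$. The defining relation $A(z) = z^d T(1/z) \beta$ then converts $\lambda^j$ into $z^{d-j}$ and multiplies each matrix on the right by $\beta = \textrm{diag}(\beta_-, \beta_+)$, which rescales its two columns by $\beta_-$ and $\beta_+$ respectively; collecting by powers of $z$ yields the second display and, in the limit $z \to 0$, extracts the leading $z^0$ term of $\textrm{tr}\,A(z)$ from the $j = d$ summand, giving $f_{[d]} m_{[d]} (\beta_- - \beta_+ e^{2\nu(x_d - x_1)})$. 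The expansion near $z = \infty$ reads off $z^d$ from $I = \emptyset$ (giving $\beta z^d$) and $z^{d-1}$ from the $j = 1$ singletons: since $f_{\{k\}} = 1$ by the empty-product convention, summation over $k$ produces $M$ and $M_\pm$ in the four entries, matching \autoref{lem:ump}.

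The only delicate point is the bookkeeping of the ordering convention: the multi-index product $X_I = X_{i_j} \cdots X_{i_1}$ must be consistent both with the order in which the $T_k$ appear in $T$ and with the hypothesis of the preceding lemma. Once that convention is pinned down, the remainder is routine algebra with no real obstacle.
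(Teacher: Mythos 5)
Your proposal is correct and follows exactly the route the paper intends: the paper states \autoref{thm:Astructure} without a written proof precisely because it is meant to follow from the distributive expansion of $T=T_d\cdots T_1$ combined with the preceding lemma evaluating $X_I$, which is what you carry out, and your reading of the asymptotics (including the column-rescaling by $\beta$ and the empty-product convention $f_{\{k\}}=1$ for the $z^{d-1}$ coefficient) is accurate.
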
 
 In view of the invariance of $\textrm{tr}A(z)$ we immediately have: 
\begin{corollary} \label{cor:conservation}
Under the CF flow 
\begin{enumerate} 
\item the total mass (momentum) $M$ is conserved; 

\item if at $t=0$ all masses $m_j(0)$ have the same sign then  they can not collide, meaning, $x_i(t)\neq x_j(t), i\neq j$ for 
all times.  
\end{enumerate} 
\end{corollary}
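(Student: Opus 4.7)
The plan for part~(1) is immediate from \autoref{cor:trueLax}: since $\dot A(z)=[B(z),A(z)]$ is an isospectral Lax equation at each $z$, the polynomial $\textrm{tr}\,A(z)$ is conserved coefficient by coefficient. From the $z\to\infty$ expansion in \autoref{thm:Astructure}, the coefficient of $z^{d-1}$ in $\textrm{tr}\,A(z)$ is $\beta_- M-\beta_+ M=(\beta_- -\beta_+)M$, which equals $M$ under the standing normalization $\beta_- -\beta_+=1$. Hence $M$ is a first integral, proving (1).

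For part~(2), I would proceed in three steps. First, observe that the peakon ODE $\dot m_j=-m_j\avg{u_x}(x_j)$ is linear and homogeneous in $m_j$, so integration gives $m_j(t)=m_j(0)\exp\bigl(-\int_0^t\avg{u_x}(x_j(s))\,ds\bigr)$; the sign of each mass is preserved along the flow. Combined with part~(1), AM--GM then bounds $\abs{m_{[d]}}\le(\abs{M}/d)^d$ uniformly in time. Second, invoke the conservation of the constant coefficient of $\textrm{tr}\,A(z)$, which by the $z\to 0$ expansion in \autoref{thm:Astructure} is
\[
Q_0 = f_{[d]}\,m_{[d]}\,\bigl(\beta_- -\beta_+ e^{2\nu(x_d-x_1)}\bigr).
\]
Third, suppose for contradiction that adjacent particles collide at time $t^*$: then one factor $1-e^{-2\nu(x_{k+1}-x_k)}$ in $f_{[d]}$ vanishes, so $f_{[d]}(t^*)=0$; with $\abs{m_{[d]}}$ bounded and $x_d-x_1$ staying bounded, this forces $Q_0(t^*)=0$, whence $Q_0\equiv 0$ on the orbit. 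This directly contradicts $Q_0(0)\neq 0$, which holds for example in the Camassa-Holm specialization $\beta_+=0$, where $Q_0(0)=f_{[d]}(0)\,m_{[d]}(0)>0$ under the same-sign hypothesis.

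The main obstacle is the genuine CF regime with $\beta_+>0$: there the exponential factor $\beta_- -\beta_+ e^{2\nu(x_d-x_1)}$ can vanish on exceptional initial configurations, leaving $Q_0(0)=0$ and defeating the single-invariant argument above. To cover this case I would exploit the \emph{full family} of conserved coefficients
\[
c_j=\sum_{I\in\binom{[d]}{j}} f_I\,m_I\,\bigl(\beta_- -\beta_+ e^{2\nu(x_{i_j}-x_{i_1})}\bigr),\qquad 1\le j\le d,
\]
of $\textrm{tr}\,A(z)$: under a collision at adjacent indices $i,i+1$, every subset $I\in\binom{[d]}{j}$ containing both indices loses its contribution (since $f_I$ then inherits a vanishing factor), while the surviving subsets omit one of $i,i+1$ and retain bounded values by AM--GM. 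The hardest step---and the one I expect to require the most care---is showing that for \emph{some} $j$ the surviving terms cannot reproduce the conserved value of $c_j$, so that at least one conservation law must be violated by any hypothetical collision; an alternative is to appeal to the global existence and Stieltjes-type representation of the CF flow from~\cite{beals-sattinger-szmigielski:CF}, in which distinctness of positions is built into the inverse data.
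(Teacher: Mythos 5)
Your argument follows the paper's own proof almost exactly: part (1) is the invariance of the $z^{d-1}$ coefficient of $\textrm{tr}\,A(z)$ combined with the normalization $\beta_--\beta_+=1$, and part (2) reduces a hypothetical adjacent collision to the vanishing of the conserved constant term $C_d=f_{[d]}\,m_{[d]}\,(\beta_--\beta_+e^{2\nu(x_d-x_1)})$, since $f_{[d]}$ acquires a zero factor at the collision. Two remarks. First, you are somewhat more careful than the paper on the quantitative point: to deduce $C_d(t^*)=0$ from $f_{[d]}(t^*)=0$ one needs the remaining factors to stay finite up to the collision time, and your observation that the same-sign hypothesis plus conservation of $M$ gives $\abs{m_{[d]}}\leq(\abs{M}/d)^d$ supplies exactly this; the paper instead remarks that $m_{[d]}$ stays bounded away from zero, which is the other (and less relevant) direction of control. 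Second, the ``obstacle'' you flag --- initial data with $\beta_--\beta_+e^{2\nu(x_d(0)-x_1(0))}=0$, so that $C_d(0)=0$ and the single-invariant argument is vacuous --- is not resolved by the paper either: its proof explicitly supposes the constant of motion is nonzero at $t=0$ and concludes only under the hypothesis $\beta_--\beta_+e^{2\nu(x_d(0)-x_1(0))}\neq 0$, a caveat that does not appear in the statement of the corollary. Your proposed repair via the full family of conserved coefficients $c_j$ of $\textrm{tr}\,A(z)$ is a sensible direction (and your observation that any $I$ containing both colliding indices contributes a vanishing $f_I$ is correct), but as you acknowledge the decisive step is not carried out; on this exceptional stratum your write-up and the paper's are in the same, genuinely incomplete, position, so you should either close that case or state the corollary with the extra hypothesis.
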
 
\begin{proof} 
We start off by noting that individual masses  cannot change signs.  
Indeed the equation of motion $\dot m_j=-m_j\avg{u_x}(x_j)$ implies 
that $m_j(t)$ has the same sign as $m_j(0)$.  
The first statement follows immediately from the invariance of $\textrm{tr}A(z)$, since $\textrm{tr} A(z)=(\beta_-+\beta_+)z^d +Mz^{d-1}+O(z^{d-2})$. For the second claim it   suffices to prove that the neighbours cannot collide.  
First, we observe that  $m_{[d]}$ will remain bounded away from 
zero if masses have the same sign.  Suppose now the constant of motion $f_{[d]}m_{[d]}(\beta_--\beta_+e^{2\nu(x_d-x_1)})$ is not zero at $t=0$. Then, $f_{[d]}=\prod_{k=1}^{d-1}\big( 1-e^{-2(x_{k+1}-x_{k})}\big)$ will remain non-zero 
if   
$\beta_--\beta_+e^{2\nu(x_d(0)-x_1(0))}\neq 0$. 
\end{proof}

\section{Linearization} \label{sec:linearization}

One of the advantages of the Higgs bundle framework is that the moduli space of Higgs bundles on a Riemann surface $X$ is fibred by tori, each of which is the Jacobian of a spectral curve for a Higgs field.  In a sense, all possible spectral curves covering the given $X$ appear in the moduli space.  At the same time, the total space of the moduli space is an algebraically completely integrable system \cite{NJH:87} that extends, in a canonical way, the phase space structure of the cotangent bundle to the moduli space of bundles on $X$.  Furthermore, explicit Hamiltonians are given by invariants of the Higgs fields --- these are the components of the so-called ``Hitchin map'', which sends a Higgs field $A$ to the coefficients of its characteristic polynomial.

The \emph{spectral correspondence} identifies isomorphism classes of Higgs bundles with a fixed characteristic polynomial with isomorphism classes of holomorphic line bundles on the associated spectral curve $Y$.  This correspondence is developed for $1$-form-valued Higgs bundles on $X$ of genus $g\geq2$ by Hitchin \cite{NJH:87} and for arbitrary genus and Higgs fields taking values in an arbitrary line bundle $L\to X$ by Beauville-Narasimhan-Ramanan \cite{BNR:89} --- see also \cite{EM:94,DM:96}.  The essence of the correspondence is that the eigenspaces of a Higgs field $A$ for a holomorphic vector bundle $E\to X$ form a line bundle $S$ on $Y$, which as a curve is embedded in the total space of $L$ (since $A$ is $L$-valued and so the eigenvalues are sections of $L$).  If $r$ is the rank of $E$, then $Y$ will be an $r$-sheeted cover of $X$.  If $\pi$ is the projection from the total space of $L$ to $X$, then the direct image $(\pi|_Y)_*S$ is a vector bundle $E'$ on $X$, isomorphic to the original vector bundle $E$.  Let $z$ be a local coordinate on $X$ (just as with $X=\mathbb P^1$ in the preceding discussion).  The map that multiplies sections $s(z)$ of $S$ by $w(z)$, where $w(z)$ is the corresponding point on $Y$, is the action of eigenvalues on eigenspaces.  This map pushes forward to an $L$-valued endomorphism $A'$ of $E$ whose spectrum is $Y$.  This operation that starts with $(E,A)$ and ends with the isomorphic Higgs bundle $(E',A')$ is, at almost every point of $X$, diagonalization.

When $X=\mathbb P^1$ and $L=\mathcal O(d)$, Lax partners $B$ that complement the Higgs fields $A$ can be systematically computed and written down explicitly, as carried out by Hitchin in \cite{hitchin-segal-ward:is}.  The only assumption necessary on $A$ is that it has a smooth, connected spectral curve $Y$, which is a fairly generic property.  By Bertini's theorem on pencils of divisors, the generic characteristic polynomial produces a smooth spectral curve.  At the same time, the moduli space consists only of ``stable'' Higgs bundles, which are Higgs bundles $(E,A)$ with a restriction on which subbundles of $E$ can be preserved by $A$. Normally, this is imposed to ensure that the moduli space is topologically well-formed.  The precise condition for Higgs bundles, called \emph{slope stability}, originates in \cite{NJH:87}, and in our case is as follows: $(E,A)$ on $\mathbb P^1$ is \emph{semistable} if for each nonzero, proper subbundle $U$ with $A(U)\subseteq U\otimes\mathcal O(d)$ we have that$$\frac{\deg(U)}{\mbox{rank}(U)}\leq\frac{\deg(E)}{\mbox{rank}(E)}.$$Generically, $A$ will be \emph{very stable}, meaning that it preserves no subbundle whatsoever (other than $E$ itself and $0$).  This corresponds to the spectral curve being connected.  

Embedding a known integrable system into a Hitchin-type system on $\mathbb P^1$ can therefore lead to new insights about linearized flows.  Having produced such an embedding for CF --- we established the existence of Higgs fields $A$ acting on $E=\mathcal O\oplus\mathcal O$ that solve the peakon equation --- we may now compute the linearization along the Hitchin fibres directly.

\begin{theorem} \label{thm:linCF} Let $A(z,t)\in H^0(\mathbb P^1,\mbox{End}(E)\otimes\mathcal O(d))$ be a solution to the peakon equation given above, such that its spectral curve $Y$ (constant for all $t$) is a double cover of $X=\mathbb P^1$ that is embedded as a non-singular, connected subvariety in the total space of $\mathcal O(d)$. 
Let $J^{\tilde g-1}$ be the Jacobian of degree $\tilde g-1$ line bundles on $Y$ and 
$\Theta$ be the theta divisor and $\tilde g$ is the genus of $Y$.  
The CF flow as given by \autoref{cor:trueLax} is linearized 
on $J^{\tilde g-1}\setminus \Theta$.  
\end{theorem}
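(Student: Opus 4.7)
The plan is to combine the Beauville--Narasimhan--Ramanan spectral correspondence with the standard eigenvector--cocycle argument for isospectral Lax flows, isolating the stratum of the Hitchin fibre on which the underlying bundle $E$ stays trivial.

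First I would identify that stratum. Applying \cite{BNR:89} to $X=\mathbb{P}^1$, $L=\mathcal{O}(d)$, $r=2$, a Higgs bundle $(E,A)$ with $E=\mathcal{O}\oplus\mathcal{O}$ and smooth, connected spectral curve $Y$ determines an eigenline bundle $\widetilde S\to Y$ with $\pi_*\widetilde S\cong E$. Comparing Euler characteristics via $\chi(\pi_*\widetilde S)=\chi(\widetilde S)$ and Riemann--Roch on $Y$ forces $\deg\widetilde S=\tilde g+1$. After the twist $S:=\widetilde S\otimes\pi^*\mathcal{O}(-1)\in J^{\tilde g-1}(Y)$, the projection formula converts the requirement $\pi_*\widetilde S\cong\mathcal{O}\oplus\mathcal{O}$ (rather than some less balanced $\mathcal{O}(k)\oplus\mathcal{O}(-k)$ with $k>0$, as Birkhoff--Grothendieck otherwise allows) into $h^0(S)=0$, i.e.\ $S\notin\Theta$. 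This realises the subset of the Hitchin fibre with trivial underlying bundle as the open set $J^{\tilde g-1}\setminus\Theta$.

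Next I would use isospectrality. By \autoref{cor:trueLax} the Lax equation is solved by $A(z,t)=g(t)A(z,0)g(t)^{-1}$ with $\dot g=B(z,t)g$, $g(0)=I$, so $Y$ is constant in $t$ and the CF dynamics is encoded entirely in the motion of $S(t)\in J^{\tilde g-1}(Y)$. To show that this motion is linear, I would cover $Y$ by two Zariski opens $U_0\supset\pi^{-1}(0)$, $U_\infty\supset\pi^{-1}(\infty)$, choose local holomorphic eigenvectors $\psi_0,\psi_\infty$ of $A(z,t)$ normalised on each patch, and represent $S(t)$ by the cocycle $g_{0\infty}(z,t)=\psi_0/\psi_\infty$. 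Plugging in $\dot\psi=B\psi$ (modulo a scalar gauge that preserves the eigenvalue stratification) expresses $\partial_t\log g_{0\infty}=\eta_\infty-\eta_0$, with $\eta_i$ a scalar function on $U_i$ depending only on the principal parts of $B$ at $z=0,\infty$. The resulting \v{C}ech class $\xi\in H^1(Y,\mathcal{O}_Y)\cong T_{S(t)}J^{\tilde g-1}$ depends only on these principal parts, which by \autoref{lem:ump} and \autoref{cor:conservation} are built from the conserved quantities $M$, $M_\pm$, $C$. So $\xi$ is $t$-independent and $S(t)$ flows by translation by a fixed vector in $T_0J^{\tilde g-1}$, i.e.\ linearly.

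I expect the main obstacle to be verifying that the local pieces $\eta_i$ glue into a \v{C}ech coboundary on $Y$ rather than only on $\mathbb{P}^1$: for this the intertwining $B_-\beta=\beta B_+$ noted before \autoref{cor:trueLax} and the Manakov-triple correction from \autoref{rem:Manakov} are essential, ensuring global consistency of the normalised eigenvector equation across the branch locus of $\pi$. A finer point, elaborated elsewhere for $d=2$, is the dynamical meaning of the exclusion of $\Theta$: $S(t)$ meets the theta divisor exactly when $E$ jumps off $\mathcal{O}\oplus\mathcal{O}$ and the peakon flow develops a collision singularity, matching the regularity guaranteed by \autoref{cor:conservation}.
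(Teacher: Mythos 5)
Your overall strategy --- identify $J^{\tilde g-1}\setminus\Theta$ with the locus where the pushed-forward bundle stays $\mathcal O\oplus\mathcal O$, then track the eigenline bundle by a \v{C}ech cocycle and show its velocity in $H^1(Y,\mathcal O_Y)$ is constant --- is a legitimate Griffiths-style alternative to what the paper does (the paper instead invokes Hitchin's normal-form criterion from \cite{hitchin-segal-ward:is}: the flow linearizes iff $B(z)=\gamma(z,A(z))^+$ for some $\gamma(z,w)=\sum b_i(z)w^i/z^N$, and then verifies by direct computation from \autoref{thm:Astructure} that the CF Lax partner equals $\bigl(A(z)/z^{d-1}\bigr)^+$ up to central terms, once the free constant is fixed to $C=\tfrac{\beta_-+\beta_+}{2}M$). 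Your first paragraph, on degrees and the theta divisor, is sound and consistent with the paper's Sections 5--6.

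The gap is in the decisive step. You justify the $t$-independence of the class $\xi$ by saying the relevant data are ``built from the conserved quantities $M$, $M_\pm$, $C$.'' But $M_\pm=\sum_j m_je^{\pm 2\nu x_j}$ are \emph{not} conserved --- \autoref{cor:conservation} only gives conservation of $M$ (and of $C_d$), and the off-diagonal entries $\beta_+M_-$, $-\beta_-M_+$ of $B(z)$ genuinely evolve in time. As written, your argument would ``prove'' linearity for an arbitrary isospectral Lax pair, which is exactly what Griffiths' criterion rules out: the velocity of the eigenline bundle can be time-dependent, and linearity is a nontrivial condition on $B$. The correct statement is that $\xi$ is represented by the Laurent tails, at the points of $Y$ over $z=\infty$ (and over $z=0$ if $B$ had a pole there, which it does not), of the \emph{scalar} $\eta=\phi^{T}B\psi/\phi^{T}\psi$ induced by $B$ on the eigenline --- not by the matrix principal parts of $B$. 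Only the leading diagonal term $\tfrac{z}{2}\sigma_3+CI$ of $B$ survives into those tails; the $M_\pm$-entries contribute only to the holomorphic parts of the local representatives and hence not to the class. Equivalently, one must exhibit $\eta$ as $w/z^{d-1}$ plus terms holomorphic near the relevant points, which is precisely the computation the paper performs (and which forces the identification of $B$ with $\bigl(A/z^{d-1}\bigr)^+$ modulo multiples of the identity). Your closing appeal to the Manakov-triple correction of \autoref{rem:Manakov} is also misdirected: equation \eqref{eq:trueLax} is an honest Lax equation with no $CL$ term, so no such correction enters the gluing. Until the constancy of the Laurent tails of the induced scalar is established for the specific $B$ of \autoref{lem:preLax}, the linearization claim is unproved.
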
 
\begin{proof} 
The problem of linearization for matrix Lax equations with a spectral 
parameter $z$ has been studied by several authors \cite{hitchin-segal-ward:is, griffiths:lax}
We will use pertinent to this problem material from N. Hitchin's lectures 
in \cite{hitchin-segal-ward:is}.  For the Lax equation 
of the type $\dot A(z)=[B(z), A(z)]$ with  polynomial matrix valued 
$A(z), B(z)$ the linearization on $J^{g-1}\setminus \Theta$ happens 
if and only if $B(z)$ has a specific form dependent on $A(z)$ (\cite[Lecture 5]{hitchin-segal-ward:is}):
\begin{equation*}
B(z)=\gamma(z,A(z))^+,
\end{equation*}
where 
\begin{equation*}
\gamma(z,w)=\sum_{i=1}^{r-1}\frac{b_i(z) w^i}{z^N}.  
\end{equation*}
In this formula, $r$ is the degree (in $w$) of the spectral curve $\det(w-A(z))=0$, 
$N$ is an integer, $b_i(z)$ are polynomials in $z$, and $^+$ means 
the projection on the polynomial part.  In our case $r=2$ so 
$\gamma(z,w)=\frac{b(z) w}{z^N}$.  We now choose 
$b(z)=1$ and $N=d-1$ and settle with $B(z)=\big(\frac{A(z)}{z^{d-1}}\big)^+$.  
Then by \autoref{thm:Astructure} we get 
\begin{equation*}
B(z)=\begin{bmatrix} \beta_-z&0\\0&\beta_+z \end{bmatrix}+ \begin{bmatrix} \beta_-M&\beta_+M_- \\-\beta_- M_+&-\beta_+ M \end{bmatrix}.  
\end{equation*} 
However, $B(z)$ is not unique; any power of $A(z)$ can be added to $B(z)$ 
without changing the Lax equation, in particular any multiple of 
the identity can be added with impunity.  For example, 
by writing $\beta_-=\tfrac12 (\beta_-+\beta_+)+\tfrac12(\beta_--\beta_+)$ 
and the same for $\beta_+$, we see that one can take 
\begin{equation*} 
B(z)=\begin{bmatrix} \frac z2&0\\0&-\frac z2 \end{bmatrix}+ \begin{bmatrix} \frac{\beta_-+\beta_+}{2}M&\beta_+M_- \\-\beta_- M_+&-\frac{\beta_-+\beta_+}{2} M \end{bmatrix}.
\end{equation*} 
It suffices now to set 
\begin{center} \boxed{
C=\frac{\beta_-+\beta_+}{2} M }
\end{center} 
in \autoref{lem:preLax}
to complete the proof.

\end{proof}

\section{Two CF Peakons: Collisions}
  In this section we will analyze, as a concrete example, the case of $d=2$ to get a better insight 
  into the global existence of 
  CF flows, but also to demonstrate the special dynamical meaning of the theta divisor $\Theta\subset J^{\tilde g-1}$.

We recall that the moduli space of stable twisted Higgs bundles with $d=2$ on $\mathbb P^1$ on a rank $2$ holomorphic bundle of degree $0$ was studied algebro-geometrically by the first named author in \cite{SR:13}.  Over $\mathbb C$, there is a $9$-dimensional moduli space of such Higgs bundles.  The base of the Hitchin fibration is $8$-dimensional, reflecting the extreme underdetermined nature of the integrable system: we only need $2$ real Hamiltonians, as per the dimension of the fibre, but we actually have a $16$-dimensional space of such Hamiltonians available to us.

Theorem 6.1 in \cite{SR:13} characterizes exactly which holomorphic bundles $E$ with these topological invariants admit the structure of a semistable Higgs bundle $(E,A)$.  These are precisely $E=\mathcal O\oplus\mathcal O$ and $E=\mathcal O(1)\oplus\mathcal O(-1)$.  (A rank $2$, degree $0$ vector bundle $E=\mathcal O(k)\oplus\mathcal O(-k)$ with $k\geq2$ will necessarily have a preserved sub-line bundle $\mathcal O(k)$ and $k>0$, which violates stability as posed above.)  It is also shown in Proposition 8.1 in the same reference that, for $d=2$, there is up to isomorphism a unique Higgs field $A$ for the bundle $E=\mathcal O(1)\oplus\mathcal O(-1)$, once the characteristic coefficients of $A$ have been fixed.  These Higgs fields form a section of the Hitchin system here, intersecting each torus in this unique point.  In the isospectral problem, where the characteristic coefficients are constant, we thus have a unique Higgs field $A$ corresponding to $E=\mathcal O(1)\oplus\mathcal O(-1)$ in the Hitchin fibre determined by $\beta_+,\beta_-,M,\nu$.  All of the remaining Higgs bundles in the fibre are ones with $E=\mathcal O\oplus\mathcal O$. 

This begs the question of the meaning of this unique point.  Algebro-geometrically, the meaning is clear: it is the theta divisor $\Theta$, or rather a twist of it.  To see this, note that a Higgs bundle $(E,A)$ valued in $\mathcal O(2)$ with rank $2$ and degree $0$ yields, via the spectral correspondence, a degree $2$ line bundle $S$ on the spectral curve $Y$, which is a genus $\tilde g=1$ curve that covers $\mathbb P^1$ $2:1$.  (See again Section 8 of \cite{SR:13}.)  In general, the genus of the spectral curve can be computed through the Riemann-Hurwitz formula by noting the number of zeros of the determinant of $A$, which is $2d$.  As such, we will have $\tilde g=d-1$ in general.  Subsequently, $\deg(S)$ can then be computed via the Grothendieck-Riemann-Roch Theorem.  This, in fact, will always be $d$ and so $\deg(S)=2$ in this case.  (It is perhaps useful to point out that the degree of $S$ need not match the degree of $E$, as ramification in general disturbs such invariants.  A precise formula is given in Proposition 4.3 of Chapter 2 of \cite{hitchin-segal-ward:is}.)  All in all, the spectral correspondence employs $J^2$ of $Y$ while the theta divisor is a subvariety of $J^0$, as $\tilde g-1=0$ in this case.

Note that $\Theta$ is complex codimension $1$ in $J^0$.  In this particular case, it is a single point: the isomorphism class of the trivial line bundle $\mathcal O_Y:=Y\times\mathbb C$.  The remaining points in $J^0$ are holomorphic line bundles of degree $0$ for which there is no global nonzero holomorphic section.   To bridge the gap between $J^0$ and $J^2$, note that when $E=\mathcal O\oplus\mathcal O$, we have $E\otimes\mathcal O(-1)\cong\mathcal O(-1)\oplus\mathcal O(-1)$, which has no nonzero holomorphic sections.  On the other hand, when $E=\mathcal O(1)\oplus\mathcal O(-1)$, we have $E\otimes\mathcal O(-1)\cong\mathcal O\oplus\mathcal O(-2)$, which has a $1$-dimensional space of holomorphic sections.  Upstairs on $Y$, the associated line bundle is $S\otimes\pi^*\mathcal O(-1)$, where $\pi$ is again the projection of $\mathcal O(2)$ (restricted to $Y$).  Since $S$ is a degree $2$ cover, the degree of $S$ is shifted by $2\cdot(-1)$ when we twist by $\pi^*\mathcal O(-1)$.  In other words, we get $E=\mathcal O(-1)\oplus\mathcal O(-1)$ as the pushforward of any line bundle in $J^0$ except for if we push forward $\mathcal O_Y$ (i.e. the theta divisor), which instead gives us $E=\mathcal O\oplus\mathcal O(-2)$.

Translating all of this back to $J^2$, every line bundle in $J^2$ pushes forward to give $E=\mathcal O\oplus\mathcal O$ except for the distinguished line bundle $\mathcal O_Y\otimes\pi^*\mathcal O(1)$, which pushes forward to give $E=\mathcal O(1)\oplus\mathcal O(-1)$ --- and, along with it, a unique Higgs field $A$ for this bundle whose spectrum is $Y$.

Dynamically, we wish to demonstrate that this unique Higgs bundle structure on $\mathcal O(1)\oplus\mathcal O(-1)$, and hence the theta divisor itself, is a collision solution, extending the smooth dynamics presented earlier.

Now, recall the transition matrix $T$ for $d=2$:
$$T = T_2 T_1 = \left(I + \lambda m_2 \begin{bmatrix} 1 & e^{-2 \nu x_2} \\ -e^{2 \nu x_2} & -1 \end{bmatrix}\right) \left(I + \lambda m_1 \begin{bmatrix} 1 & e^{-2 \nu x_1} \\ -e^{2 \nu x_1} & -1 \end{bmatrix}\right)=$$

$$ I + \lambda \begin{bmatrix} m_1 + m_2 & m_1 e^{-2\nu x_1} + m_2 e^{-2 \nu x_2} \\ -(m_1 e^{2 \nu x_1} + m_2 e^{2 \nu x_2}) & - (m_1 + m_2) \end{bmatrix} + \lambda^2 m_1 m_2 \begin{bmatrix} 1 - e^{2 \nu (x_1 - x_2)} & e^{-2 \nu x_1} - e^{-2\nu x_2} \\ e^{2 \nu x_1} - e^{2 \nu x_2} & 1 - e^{2\nu (x_2 - x_1)}\end{bmatrix}.$$

Also, recall that by \autoref{cor:trueLax} $A(z) = z^2 T(\frac 1 z) \beta$, and $M = m_1 + m_2$. Hence, 

$$A(z) ={\tiny \begin{bmatrix} \beta_- (z^2 + zM + m_1 m_2 (1 - e^{2 \nu (x_1 - x_2)}) & \beta_+(z (m_1 e^{-2 \nu x_1} + m_2 e^{-2 \nu x_2}) + m_1 m_2 (e^{-2 \nu x_1} - e^{-2 \nu x_2}) \\ \beta_-(-z(m_1 e^{2\nu x_1} + m_2 e^{2 \nu x_2}) + m_1 m_2 (e^{2 \nu x_1} - e^{ 2\nu x_2})) & \beta_+ (z^2 - zM + m_1 m_2 (1 - e^{2 \nu (x_2 - x_1)})) \end{bmatrix}}. $$
Since 
$\Tr A(z)$ must be invariant, i.e. since the expression
$$z^2 (\beta_+ + \beta_-) + zM + m_1 m_2 (1 - e^{2\nu (x_1 - x_2)})(\beta_- - \beta_+ e^{2 \nu (x_2 - x_1)})$$ is time independent, it must follow that both  $M$ and \begin{center} 
\boxed{C_2=m_1 m_2 (1 - e^{2\nu (x_1 - x_2)})(\beta_- - \beta_+ e^{2 \nu (x_2 - x_1)})} 
\end{center} 
are invariant. By the relation $\beta_- - \beta_+ = 1$, both $\beta_+$ and $\beta_-$ are also determined by $\Tr A(z)$.
We recall that by \autoref{cor:conservation} there are no 
collisions ($x_1=x_2$) if masses $m_1,m_2$ are of the same sign.  
However, when masses have opposite signs the collision will occur exactly as they do in the $\beta_+=0$ case \cite{camassa-holm, beals-sattinger-szmigielski:moment}.

Numerical solutions to the $d=2$ case show a collision between particles $x_1$ and $x_2$ with various initial conditions.
The positions until the collision occurs for initial conditions $x_1(0)=1,x_2(0)=2,m_1(0)=5,m_2(0)=-1,\nu = 2, \beta_+ = 0.018$ are shown by the following graph:
\begin{figure}[h]
\begin{center}
\includegraphics[width=0.8\linewidth]{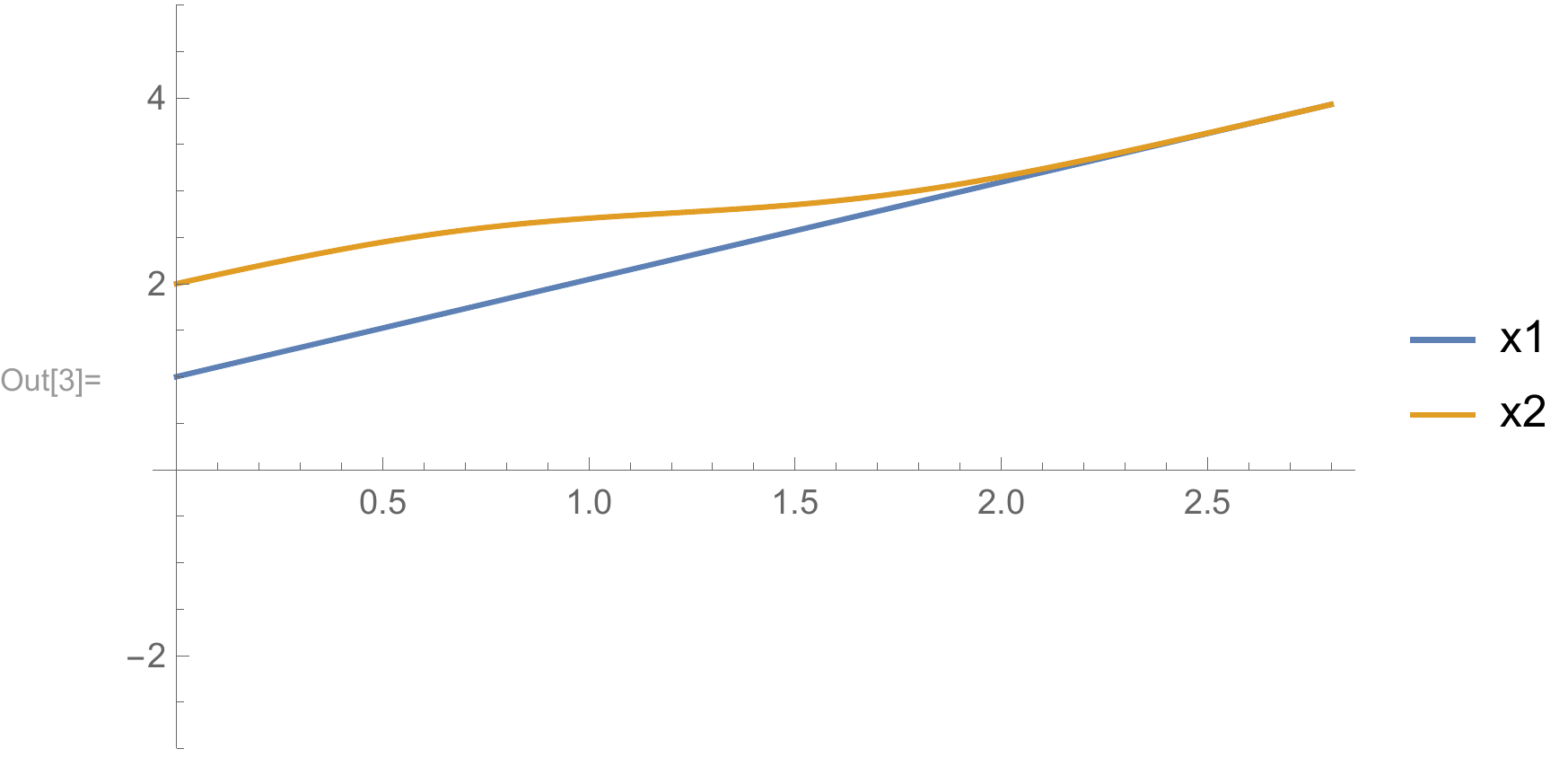}
\end{center}
\end{figure}

The masses also grow very large near the collision point with the same initial conditions:
\begin{figure}[h]
\begin{center}
\includegraphics[width=0.8\linewidth]{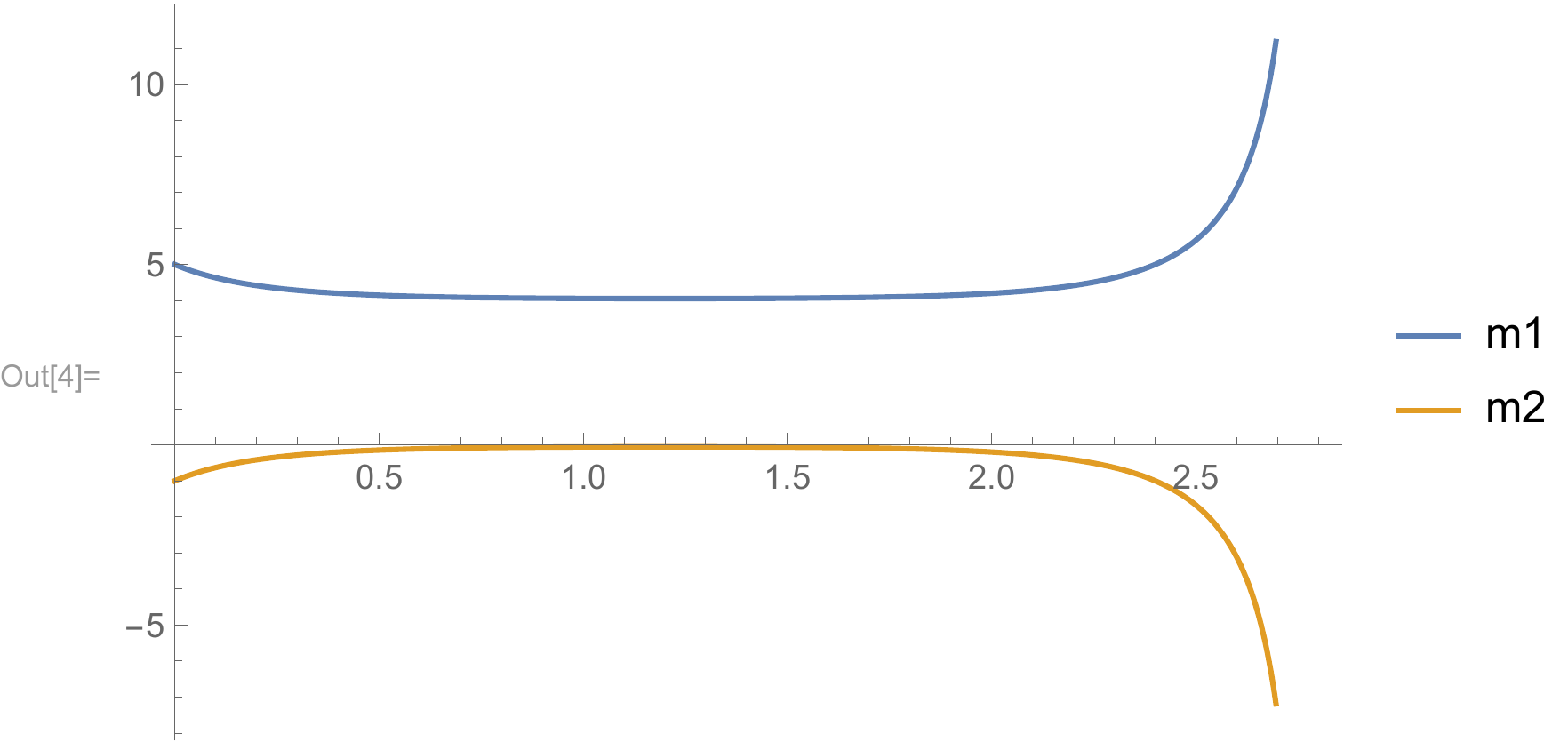}
\end{center}
\end{figure}

Let us briefly describe on a heuristic level 
the mechanics of such collisions.  Suppose the masses have opposite signs 
and $C_2<0$.  When $x_2-x_1$ becomes small the masses (momenta) 
grow large, one becoming large negative, the other large positive, 
while preserving the constant $M=m_1+m_2$.  Moreover, at the collision 
$x_1 \rightarrow x_2$, $\beta_- - \beta_+ e^{2 \nu (x_2 - x_1)} \rightarrow 1$ (as $\beta_- - \beta_+ = 1$), and thus we have that the $z$ dependent off-diagonal terms involve:
$$ m_1 m_2 (1 - e^{2 \nu (x_1 - x_2)}) \rightarrow C_2. $$
Therefore at the collision, we have:

$$A(z) \rightarrow \begin{bmatrix} \beta_- (z^2 + zM + C_2) & \beta_+ (z M e^{-2 \nu x_1} + C_2 e^{-2 \nu x_1})  \\ \beta_-(-zMe^{2 \nu x_1} - C_2e^{2\nu x_1}) & \beta_+ (z^2 - zM - C_2) \end{bmatrix}$$
$$ = \begin{bmatrix} \beta_- (z^2 + zM + C_2) & \beta_+ M e^{-2 \nu x_1}(z + \frac {C_2} M)  \\ -\beta_- Me^{2 \nu x_1}(z + \frac {C_2} M) & \beta_+ (z^2 - zM - C_2) \end{bmatrix}.$$
We see that the off diagonal terms have the same zeros, so we will conjugate the matrix by $D = \begin{bmatrix} \frac 1 {z + \frac {C_2}{ M}} & 0 \\ 0 & z + \frac {C_2}{ M}\end{bmatrix}$.  This singular automorphism of $E$ has an $\mathcal O(-1)$ section and a $\mathcal O(1)$ section, and thus has transformed the collision Higgs field $A$ into one for the bundle $E=\mathcal O(1)\oplus\mathcal O(-1)$.  We are now precisely at the (twisted) theta divisor in the fibre, as claimed earlier.  The new form of the Higgs field is
$$A' = D^{-1} A D = \begin{bmatrix} \beta_- (z^2 + zM + C_2) & \beta_+ M e^{-2 \nu x_1}(z + \frac {C_2} M)^3  \\ -\frac {\beta_- Me^{2 \nu x_1}}{z + \frac {C_2} M} & \beta_+ (z^2 - zM - {C_2}) \end{bmatrix}.$$
Let $P = \begin{bmatrix} a & b \\ 0 & d\end{bmatrix}$ be a gauge transformation of $\mathcal O (1) \bigoplus \mathcal O (-1)$, meaning $a,d$ are numbers and $b$ is a polynomial with $\deg b = 2$. Then conjugating $A'$ by $P$ gives:
$${A}'' = P^{-1} A' P = \begin{bmatrix} a^{-1} & -\frac b {ad} \\ 0 & d^{-1} \end{bmatrix}  \begin{bmatrix} \beta_- (z^2 + zM + C_2) & \beta_+ M e^{-2 \nu x_1}(z + \frac {C_2} M)^3  \\ -\frac {\beta_-Me^{2 \nu x_1}}{z + \frac {C_2} M} & \beta_+ (z^2 - zM - C_2) \end{bmatrix} \begin{bmatrix} a & b \\ 0 & d \end{bmatrix} $$
$$ = {\tiny \begin{bmatrix} \beta_- (z^2 + zM + C_2) + \frac {\beta_- Me^{2 \nu x_1}}{z + \frac {C_2} M} \frac b d & \frac b a (\beta_- (z^2 + zM + C_2) + \frac {\beta_-Me^{2 \nu x_1}}{z + \frac {C_2} M} \frac b d - \beta_+ (z^2 - zM - C_2)) + \frac d a \beta_+ M e^{-2 \nu x_1}(z + \frac {C_2} M)^3 \\  -\frac {\beta_-Me^{2 \nu x_1}}{z + \frac {C_2} M} \frac a d &  -\frac {\beta_- Me^{2 \nu x_1}}{z + \frac {C_2} M} \frac b d +  \beta_+ (z^2 - zM - C_2)\end{bmatrix}}.$$
Choosing $\frac a d = - \frac 1 {\beta_- M e^{2\nu x_1}}$, and $b = \beta_- M a(z+\frac {C_2} M)^2$ we have:
$$ {A}'' = \begin{bmatrix} \beta_- z^2 & \beta_- M (z + \frac {C_2} M)^2(-z^2 + \beta_+(zM + C_2)) - \beta_+ \beta_-M^2(z + \frac {C_2} M)^3\\  \frac 1 {z + \frac {C_2} M} &  \beta_+ z^2 + zM + C_2\end{bmatrix}.$$
With further simplification we have:
$$ {A}'' = \begin{bmatrix} \beta_- z^2 & -\beta_- M (z + \frac {C_2} M)^2 z^2 \\  \frac 1 {z + \frac {C_2} M} & \beta_+ z^2 + zM + {C_2}\end{bmatrix}.$$

We see that ${A}''$ is determined by $\beta_+, \beta_-, M, C_2$, which are all determined by $\Tr A(z)$. Therefore any Higgs fields with the same trace can be conjugated to the same form ${A}''$ at the collision $x_1 = x_2$, and so we have the uniqueness of the collision point, up to gauge.

We also note, somewhat in contrast to \cite{SR:13}, that the Higgs field $A''$ has a pole of order $1$ at $z=-C_2/M$ on $\mathbb P^1$, and so is not purely holomorphic on this chart.  This originates in the fact that the off-diagonal terms of the original solution $A$ for $E=\mathcal O\oplus\mathcal O$ were of degree strictly less than $2$.  Hence, the collision necessitates not only a change of bundle type but also a change of Higgs field type, that is, to a parabolic Higgs field with and order $1$ pole.

\begin{remark} The case of $d=2$ was also investigated by N. Hitchin recently in \cite{NJH:17} in the context of Nahm's equations, where the singular Hitchin fibres are studied and non-classical conserved quantities are shown to exist.\end{remark}

\section{Inverse Problem: Recovering $m_j$, $x_j$}

Even though we already know by \autoref{thm:linCF} that the dynamics 
linearize on the Jacobian $J^{\tilde g-1}$ of the Riemann surface $Y$, 
the actual task is to solve the peakon ODE system \eqref{eq:peakonODEs}.    
We will proceed in the following way.  First, 
 we will study the eigenvector mapping 
\begin{equation}\label{eigvecmap}
\phi_t: Y\rightarrow \mathbb P(E) 
\end{equation}
with $E=\mathcal O\oplus\mathcal O$, given by 
the eigenvalue problem: 
\begin{equation}\label{eigenvaluep}
A(z) v=w v. 
\end{equation}
For reasons of symmetry we will shift $w \rightarrow w+\tfrac12 \Tr (A(z))$.  
This is equivalent to making $A$ traceless, which corresponds to passing to the $\mbox{SL}(2,\mathbb C)$ Hitchin moduli space, which has fibres of the same dimension $d-1$ over a smaller base.  Now, we will construct a meromorphic function $W$ on $Y$ by 
taking the ratio of two holomorphic sections of the line bundle $\phi_t^*\mathcal{O}_{\mathbb P(E)}(1)\in J^{0}(Y)$, where $\mathcal{O}_{\mathbb P(E)}(1)$ is the hyperplane bundle.  Recall that the off-diagonal polynomials of $A$ are generically of 
degree $d-1$ (see \autoref{thm:Astructure}).  
Indeed, by expanding \eqref{eigenvaluep} with shifted $w$, we get  
\begin{align}
A_{11}(z) \zeta _1 +A_{12}(z)\zeta _2=&(w+\tfrac12 \Tr A(z))\zeta _1 \label {firsteig}\\
A_{21}(z) \zeta _1 +A_{22}(z)\zeta _2 =&(w+\tfrac12 \Tr A(z)) \zeta _2  \label{secondeig},
\end{align}
 which results in three useful formulas for the 
\textit{ generalized Weyl function}  $W=\frac{\zeta _2}{\zeta_1}$: 
\begin{align}
W&=\frac{w -\tfrac12 \Tr (A(z)\sigma_3)}{A_{12}(z)}\label{Weylanalog1}, \\
W&=\frac{A_{21}(z)}{w +\tfrac12 \Tr (A(z) \sigma_3)}\label{Weylanalog2}, \\
W&=\frac{A_{21}(z) +
A_{22}(z) W }{A_{11}(z) + A_{12}(z) 
W}, \label{Weylanalog3}
\end{align}
where $\sigma_3 =\begin{bmatrix}  1&0\\0& -1 \end{bmatrix}$.  
Thus $W$ can be viewed as a ratio of two holomorphic functions on an affine chart on the sphere, 
one with zeros at the zeros of $A_{12}$ , the other with zeros
at the zeros of $A_{21}$, properly lifted to $Y$ (i.e. 
if $A_{12}(z_0)=0$ then the unique lift is  $(z_0, w_0=-\tfrac12 \Tr ( A(z_0)\sigma_3)$). The numbers of zeros and poles are equal, and so we view this as a meromorphic section of a holomorphic line bundle $U$ of degree $0$ on $Y$.

The curious reader may wonder how this squares up with the degree of the line bundle in the preceding section.  For $d=2$, the line bundle $S$ on $Y$ had degree $2$.  This is again precisely the difference between working in $J^0$ and $J^2$, the translation of which was achieved by twisting by the pullback of $\mathcal O(1)$, the generator of the Picard group of the projective line.  In general, the actual line bundle $S$ that pushes forward to reconstruct the Higgs bundle on $\mathbb P^1$ is, by Grothendieck-Riemann-Roch, of degree $d$ and so the two natural Jacobians are $J^{d}$ and $J^{0}$, the former being the home of the actual spectral line bundle $S$ and the latter being the home of the line bundle $U$ whose section is the above ratio.  If $d$ is odd, then the passage from $J^0$ to $J^d$ involves an additional line bundle $R$ of degree $1$ on $Y$ that completes the equivalence $S=U\otimes\pi^*\mathcal O(1)^{(d-1)/2}\otimes R$.  In the context of the preceding section, collisions of particles occur when zeros of the numerator and denominator of the meromorphic section align.  Once all of the zeros upstairs line up with all of those downstairs, we are now at a line bundle with a constant holomorphic section --- in other words, the trivial line bundle $\mathcal O_Y$.  This point will correspond (via twisting to degree $\tilde g-1=d-2$) to a line bundle in the theta divisor.  In general, the emerging picture is that $\Theta$ is stratified by different pairings of collisions, which in turn correspond to pairings of zeros and poles in $W$.

The three formulas for $W$ above are also quite useful insofar as they reveal different 
aspects of going to the limit $\beta_+=0$.  We recall that 
this limit corresponds to the pure Camassa-Holm (CH) peakons ($\beta_-=1$) 
for which the inverse formulas yielding masses and positions 
exist \cite{beals-sattinger-szmigielski:stieltjes}.  To make this connection we observe that, if $\beta_+\rightarrow 0$, then by \eqref{Weylanalog3} one automatically obtains 
\begin{equation*}
W=\frac{A_{21}(z)}{A_{11}(z)}\stackrel{\autoref{cor:trueLax}}{=}\frac{T_{21}(\lambda)}{T_{11}(\lambda)}, \qquad \lambda =\frac1z, 
\end{equation*}
which is the desired result.  If, on the other hand, one uses 
\eqref{Weylanalog1} then one realizes that  in the limit the spectral curve  becomes $w=\tfrac12 A_{11}(z)$ and the genus drops to $0$.   Either way one 
obtains the original Weyl function of the CH peakon problem and $A_{11}(z)$ is the desired spectral invariant.   

After shifting $w$ the algebraic curve $C(w,z)$ reads
\begin{equation}\label{surfaceeq}
w^2=P(z) , \quad P=\tfrac 14 (\Tr A(z))^2 -\det \beta z^{2d}.  
\end{equation}
It is elementary to check, in view of $\beta_--\beta_+=1$,  that 
$P(z)=z^{2d}+O(z^{2d-1}), \, z\rightarrow \infty$.  Moreover, in the limit $\beta_+\rightarrow 0$, $P(z)$ becomes a perfect square
as we indicated earlier.   

  Since we will need a bit of information 
about this surface let us review some elementary facts about this surface.  
Assuming for simplicity that all roots of $P(z)$ are simple we can write 
\begin{equation}\label{canonicalcurve}
w^2=\prod _{j=1}^{2d}(z-z_j), \quad 
z_i\ne z_j, \quad i\ne j.  
\end{equation}
The curve $Y$ resulting from the compactification of $C(w,z)=0$ has a branch points at $z_j$.  We will define the upper 
sheet as the one on which $w=+\sqrt{P}$ and the value of 
the right hand side being defined as positive for large positive values of 
of $z$, with $w=-\sqrt{P}$ on the lower sheet.  
Let us denote by $C_d$ the coefficient of $z^0$ in $\Tr A(z)$.  Observe 
that if $C_d\neq 0$ then $z=0$ is not a branch point and thus we 
have two lifts $\pi^{-1} (0)$ on $Y$ to be denoted 
$0 ^+$ on the upper sheet, $0^-$ on the lower sheet respectively.

The following theorem is crucial for finding a solution to the inverse problem:

\begin{theorem} \label{thm:PadeApprox}
Suppose $C_d =\Tr A(0)>0$, then on 
the upper sheet of $Y$, using $z$  as a local parameter around  $0^+$, we have 
\begin{equation}\label {zWeylassympt}
W-\frac{A_{21}(z)}{A_{11}(z)}=\textrm{O}(z^{2d}), \quad z \rightarrow 
0.  
\end{equation}
The same result holds on the lower sheet if $C_d<0$.  
\end{theorem}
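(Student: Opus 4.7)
The plan is to combine formula \eqref{Weylanalog2} for $W$ with the spectral curve equation \eqref{surfaceeq} in order to isolate an explicit factor of $z^{2d}$ in the difference $W-A_{21}(z)/A_{11}(z)$.

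First, starting from \eqref{Weylanalog2} and using the identity $A_{11}-\tfrac12\Tr(A\sigma_3)=\tfrac12\Tr A$, I would compute
$$W-\frac{A_{21}(z)}{A_{11}(z)}=\frac{A_{21}(z)}{w+\tfrac12\Tr(A(z)\sigma_3)}-\frac{A_{21}(z)}{A_{11}(z)}=\frac{A_{21}(z)\bigl(\tfrac12\Tr A(z)-w\bigr)}{A_{11}(z)\bigl(w+\tfrac12\Tr(A(z)\sigma_3)\bigr)}.$$

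Second, I would rationalize the factor $\tfrac12\Tr A(z)-w$ using \eqref{surfaceeq}: since $(\tfrac12\Tr A(z))^2-w^2=\det\beta\cdot z^{2d}$, one obtains
$$\tfrac12\Tr A(z)-w=\frac{\det\beta\cdot z^{2d}}{\tfrac12\Tr A(z)+w},$$
so that
$$W-\frac{A_{21}(z)}{A_{11}(z)}=\frac{\det\beta\cdot A_{21}(z)\cdot z^{2d}}{A_{11}(z)\bigl(w+\tfrac12\Tr(A(z)\sigma_3)\bigr)\bigl(\tfrac12\Tr A(z)+w\bigr)}.$$
This exhibits the claimed factor $z^{2d}$ explicitly.

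Third, I would verify that the right-hand side is holomorphic and that the denominator does not vanish at $0^+$. By \autoref{thm:Astructure}, $A_{11}$ and $A_{21}$ are polynomials in $z$; moreover $A_{11}(0)=\beta_- f_{[d]}m_{[d]}$, which is nonzero provided no collision has occurred and the masses share one sign. The hypothesis $C_d=\Tr A(0)>0$, combined with the sheet convention $w=+\sqrt{P}$ at $z=+\infty$, forces $w(0^+)=+\tfrac12 C_d$ (as $z=0$ is not a branch point when $C_d\neq 0$), whence $\tfrac12\Tr A(0)+w(0^+)=C_d>0$, and a short calculation using \autoref{thm:Astructure} also gives $w(0^+)+\tfrac12\Tr(A(0)\sigma_3)=A_{11}(0)\neq 0$. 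The argument for the lower sheet under $C_d<0$ is identical after replacing $w$ by $-w$ throughout.

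The main obstacle is the sheet bookkeeping: one must align the global sheet convention (fixed at $z=+\infty$) with the sign of $w(0^+)$, which is precisely what the hypothesis on $\sgn C_d$ achieves. Once this is settled, the rest of the proof is the short algebraic identity above together with the observation that the denominator remains bounded away from zero in a neighbourhood of $0^+$.
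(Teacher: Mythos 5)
Your proposal is correct and follows essentially the same route as the paper's proof: start from \eqref{Weylanalog2}, rationalize $\tfrac12\Tr A(z)-w$ against $\tfrac12\Tr A(z)+w$ using the curve equation to expose the factor $\det\beta\, z^{2d}$, and check that the denominator tends to $A_{11}(0)\,C_d\neq0$ at $0^+$ (the paper merely writes the two linear factors as $(w+\tfrac12 A_{11})^2-\tfrac14 A_{22}^2$ before taking the limit). The one small difference worth noting is that the paper obtains $A_{11}(0)\neq0$ directly from the hypothesis $C_d\neq0$ via the identity $C_d=A_{11}(0)\bigl(1-\tfrac{\beta_+}{\beta_-}e^{2\nu(x_d-x_1)}\bigr)$, whereas you invoke extra assumptions (no collisions, masses of one sign) that are not in the theorem's statement.
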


\begin{proof} 
Assume $C_d >0$.  Using \eqref{Weylanalog2} we see 
\begin{equation*} 
W-\frac{A_{21}(z)}{A_{11}(z)}=
\frac{A_{21}(z)}{A_{11}(z)}\frac{(-w+\tfrac12 \Tr A(z))}{(w+\tfrac12 
\Tr (A(z) \sigma3))}.
\end{equation*}
Multiplying top and bottom of the right hand side by 
$(w+\tfrac12 \Tr A(z))$ and using the equation determining 
the surface $Y$ we get
\begin{equation*} 
W-\frac{A_{21}(z)}{A_{11}(z)}=
\frac{A_{21}(z)}{A_{11}(z)}
\frac{\det \beta z^{2d}}{\big((w+\tfrac12 A_{11}(z))^2-\tfrac14 A^2_{22}(z)\big)}.
\end{equation*}
We now observe that by 
\autoref{thm:Astructure} the first term $\frac{A_{21}(z)}{A_{11}(z)}$ has 
a limit as $z\rightarrow 0$.  We claim that, under the condition that 
$C_d>0$, the denominator $ (w+\tfrac12 A_{11}(z))^2-\tfrac14 A^2_{22}(z)\big)$ has a nonzero limit equal $A_{11}(0)C_d$.  That $A_{11}(0)\neq 0$ follows from  $C_d=f_{[d]}m_{[d]}(\beta_--\beta_+e^{2\nu(x_d-x_1)})=A_{11}(0)(1-\frac{\beta_+}{\beta_-} e^{2\nu(x_d-x_1)})$.  To prove the actual 
claim we observe that $0^+=(\tfrac12 (A_{11}(0)+A_{22}(0)), 0)$ hence  
on the upper sheet  
\begin{equation*} 
(w+\tfrac12 A_{11}(z))^2-\tfrac14 A^2_{22}(z)\rightarrow (A_{11}(0)+\tfrac12 A_{22}(0))^2-\tfrac14 A^2_{22}(0)=A_{11}(0)(A_{11}(0)+A_{22}(0)), 
\end{equation*} 
as claimed.  
 In the case of $C_d<0$, we go to the lower sheet  where this time $0^-=(\tfrac12 (A_{11}(0)+A_{22}(0)), 0)$ and the rest of the computation goes through identically.  
\end{proof} 
For the sake of comparison with \cite{beals-sattinger-szmigielski:stieltjes}
we will also state the previous theorem using a different local 
parametrization $\lambda=\frac{1}{z}$ and write the result 
in terms of $T_{11}(\lambda)$ and $T_{21}(\lambda)$.  

\begin{corollary} 
Suppose $C_d =\Tr A(0)>0$, then on 
the upper sheet of $Y$, using $\frac 1\lambda$  as a local parameter around  $0^+$, we have 
\begin{equation}\label {eq:lambdaWeylassympt}
W-\frac{T_{21}(\lambda)}{T_{11}(\lambda)}=\textrm{O}(\frac{1}{\lambda^{2d}}), \quad \lambda \rightarrow 
\infty
\end{equation}
The same result holds on the lower sheet if $C_d<0$.  
\end{corollary}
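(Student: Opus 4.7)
The plan is to derive this as a direct translation of \autoref{thm:PadeApprox} under the change of variable $\lambda = 1/z$, which interchanges $z = 0$ with $\lambda = \infty$ and so sends the local parameter at $0^\pm$ to $1/\lambda$ at the corresponding preimage of $\infty$ on $Y$.

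First, I would use \autoref{cor:trueLax} to rewrite the ratio $A_{21}(z)/A_{11}(z)$ in terms of the transition matrix $T$. Since $A(z) = z^d T(1/z)\beta$ and $\beta = \textrm{diag}(\beta_-,\beta_+)$, the first column of $A(z)$ picks up a common factor of $\beta_- z^d$, namely $A_{11}(z) = \beta_- z^d T_{11}(\lambda)$ and $A_{21}(z) = \beta_- z^d T_{21}(\lambda)$ with $\lambda = 1/z$. These factors cancel in the ratio, giving the identity
\begin{equation*}
\frac{A_{21}(z)}{A_{11}(z)} = \frac{T_{21}(\lambda)}{T_{11}(\lambda)}.
\end{equation*}

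Next, I would transport the asymptotic statement of \autoref{thm:PadeApprox}. The hypothesis $C_d = \Tr A(0) > 0$ is intrinsic to the curve (it involves the trace of $A$ at $z=0$), so it survives unchanged. On the upper sheet near $0^+$, the theorem gives $W - A_{21}(z)/A_{11}(z) = \textrm{O}(z^{2d})$ as $z\to 0$; substituting the identity above and replacing $z$ by $1/\lambda$, the error term $\textrm{O}(z^{2d})$ becomes $\textrm{O}(\lambda^{-2d})$ as $\lambda \to \infty$, and $1/\lambda$ serves as a valid local parameter on the upper sheet around the preimage of $\infty$. The case $C_d < 0$ similarly descends to the lower sheet.

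The only point requiring care, and really the sole mild obstacle, is the bookkeeping of sheets and of the common $\beta_- z^d$ factor: one needs to confirm that the cancellation is genuinely a cancellation of holomorphic factors (not merely a formal one) and that the distinguished point $0^+$ in the $z$-chart matches the chosen preimage of $\infty$ in the $\lambda$-chart under the change of coordinate. Once these identifications are made, the corollary is immediate and no further estimation is needed.
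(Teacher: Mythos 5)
Your proposal is correct and is essentially the argument the paper intends: the corollary is just \autoref{thm:PadeApprox} restated in the local parameter $1/\lambda = z$, using the identity $A_{11}(z)=\beta_- z^d T_{11}(\lambda)$, $A_{21}(z)=\beta_- z^d T_{21}(\lambda)$ from \autoref{cor:trueLax} (with $\beta$ diagonal) so that $A_{21}/A_{11}=T_{21}/T_{11}$, exactly as the paper itself uses elsewhere. Your bookkeeping concerns are resolved trivially since $0^+$ in the $z$-chart and the point at $\lambda=\infty$ are literally the same point of $Y$ under $\lambda=1/z$, so nothing further is needed.
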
 
The next theorem gives a complete solution of the inverse 
problem of determining $m_j, x_j$ from the knowledge of $W$:  
\begin{theorem} 
Suppose $Y$ is given by \autoref{surfaceeq} and $C_d>0$.  Let $W$ be the generalized Weyl function.   Then the first
$2d$ terms of the Taylor expansion of $zW$ around $z=0$ 
on the upper sheet determine uniquely 
the peakon measure $m$, given by \eqref{eq:mpeakon}, and thus the field $A(z)$.   
 
Likewise, if $C_d<0$, then we have the analogous result provided the expansion is computed on the lower sheet of 
$Y$.  
\end{theorem}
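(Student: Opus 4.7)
The plan is to prove uniqueness in two steps: first, use the Padé-type approximation of \autoref{thm:PadeApprox} to recover the rational function $A_{21}/A_{11}$ from the $2d$ Taylor coefficients of $zW$, and second, invert the combinatorial map from peakon data to this rational function via a Stieltjes-type continued fraction expansion. By \autoref{thm:PadeApprox} we have $W(z) = A_{21}(z)/A_{11}(z) + O(z^{2d})$ near $0^+$. Combined with the structural information of \autoref{thm:Astructure}---namely that $A_{11}$ is a polynomial of degree $d$ with leading coefficient $\beta_-$, $A_{21}$ has degree at most $d-1$, and $A_{11}(0)\ne 0$ whenever $C_d\ne 0$---the relation $A_{11}(z)W(z)\equiv A_{21}(z)\bmod z^{2d}$ becomes a $[d-1/d]$ Padé problem: a linear system of $2d$ equations in the $2d$ unknown coefficients of $A_{11}$ and $A_{21}$ (after normalizing the leading one). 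The associated Hankel coefficient matrix is invertible on the non-degenerate stratum, so the first $2d$ Taylor coefficients of $zW$ uniquely determine both polynomials.

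The remaining step is to show that the pair $(A_{11},A_{21})$ uniquely determines the peakon data $(x_1,\ldots,x_d,m_1,\ldots,m_d)$. By \autoref{thm:Astructure}, the coefficients of $A_{11}$ and $A_{21}$ are, up to the common prefactor $\beta_-$, the combinatorial quantities $\sigma_j=\sum_{I\in\binom{[d]}{j}}f_I m_I$ and $\tau_j=\sum_{I\in\binom{[d]}{j}}f_I m_I e^{2\nu x_{i_j}}$. I would invert the map $(x,m)\mapsto(\sigma,\tau)$ by expanding $A_{21}/A_{11}$ as a Stieltjes-type continued fraction and peeling off one peakon at a time, reading off $(x_d,m_d)$ from the leading behavior at each stage of the algorithm. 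In the limit $\beta_+\to 0$, this procedure specialises to the classical CH inversion of Beals-Sattinger-Szmigielski \cite{beals-sattinger-szmigielski:stieltjes}, recovered through formula \eqref{Weylanalog3}.

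The main obstacle is precisely this final inversion step. In the CH case Stieltjes positivity forces real ordered poles and positive residues, but with $\beta_+\ne 0$ the weights $f_I$ depend non-trivially on all neighbor differences $x_{i_{k+1}}-x_{i_k}$, so this positivity is no longer automatic. The crucial input will be to verify that, on the stratum of peakon configurations of fixed-signature masses---where by \autoref{cor:conservation} no collisions occur---the continued-fraction algorithm remains non-degenerate throughout and terminates at exactly depth $d$, yielding a unique $(x_j,m_j)$; the case $C_d<0$ is handled by the symmetric argument on the lower sheet.
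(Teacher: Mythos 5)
Your skeleton --- \autoref{thm:PadeApprox} to pass from the first $2d$ Taylor coefficients of $zW$ to the rational function $A_{21}/A_{11}$, followed by a Stieltjes-type continued fraction to extract the peakon data --- is the same as the paper's, but the step you yourself flag as the ``main obstacle'' is left unresolved, and your diagnosis of that obstacle is aimed at the wrong difficulty. Since $A(z)=z^d\,T(1/z)\,\beta$ with $\beta$ diagonal, the entries $A_{11}$ and $A_{21}$ lie in the same column and carry the \emph{same} factor $\beta_-$, so $A_{21}/A_{11}=T_{21}/T_{11}$ is entirely independent of $\beta_\pm$; the whole effect of $\beta_+\neq0$ is confined to \autoref{thm:PadeApprox} (i.e., to how the meromorphic $W$ on $Y$ approximates this ratio), which you already have in hand. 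There is consequently no new positivity problem hiding in the weights $f_I$, and no need to invert the combinatorial map $(x,m)\mapsto(\sigma,\tau)$ directly. The idea you are missing is the paper's reduction to the classical string: $L(\lambda)=D^2-\nu^2-2\nu\lambda m$ is unitarily equivalent to $D^2-2\nu\lambda\tilde m$ on the finite interval $[-\tfrac{1}{2\nu},\tfrac{1}{2\nu}]$, under which $T_{21}(\lambda)/(\lambda T_{11}(\lambda))$ becomes the Weyl function of a discrete inhomogeneous string whose Stieltjes continued fraction has coefficients that are \emph{literally} the gaps $l_j=x_{j+1}-x_j$ and the masses $m_j$. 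Your hoped-for ``peeling off one peakon at a time'' is then exactly the classical algorithm \eqref{eq:Sformulas}, with nondegeneracy of the Hankel determinants supplied by the standing sign assumption on the $m_j$ precisely as in \cite{beals-sattinger-szmigielski:stieltjes}; no new argument is required.

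A second, smaller gap: you assert that the $2d$ coefficients determine $A_{11}$ and $A_{21}$ outright through an invertible $2d\times2d$ Pad\'e system, but the data falls short of determining the full string. The paper's accounting is that the $2d$ terms yield only the approximants up to order $2d-1$, leaving the innermost two continued-fraction coefficients ($m_1$ and $l_0$) undetermined by $W$ alone; these are recovered not from the Taylor expansion but from two conserved quantities already encoded in the given spectral curve, namely the total mass $M$ (the $z^{d-1}$ coefficient of $\Tr A(z)$) and the total string length $\tfrac{1}{\nu}$. Any complete write-up along your lines must either invoke these two pieces of data explicitly or prove that your Hankel system is genuinely of full rank without them; as it stands the uniqueness claim does not close.
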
 

\begin{proof} (Sketch)
It is clear that knowing the peakon measure $m$ we 
can define the field $A(z)$  as in Theorem \ref{thm:Astructure}.  
By a minor modification of the arguments in \cite{beals-sattinger-szmigielski:stieltjes, beals-sattinger-szmigielski:moment} the 
operator $L(\lambda)=D^2-\nu ^2 -2\nu \lambda m$ (see \autoref{sec:CF}) 
is unitarily equivalent to $\tilde L(\lambda)=D^2 -2\nu \lambda \tilde m$ 
defined on the finite interval $[-\frac {1}{2\nu}, \frac {1}{2\nu}]$, where the measure $\tilde m$ is related to $m$ in a simple way.  
Thus without any loss of generality we can 
assume that $L(\lambda) $ in our initial problem is stated for $ D^2 -
2\nu\lambda m$.  This problem has a very natural interpretation, namely 
it describes a classical inhomogeneous string of length $\frac1\nu$ with 
discrete mass density $m$ (see \cite{gantmacher-krein,dym-mckean:gaussian}), and, consequently, the inverse problem can 
be solved using the formulas 
originally obtained for Stieltjes' continued fractions \cite{stieltjes}, as explained in \cite{beals-sattinger-szmigielski:stieltjes}, except that the 
Weyl function in our case is not a rational function on 
$\mathbf{P} ^1$ but a meromorphic function (or section) on $Y$.

Recall the main premise of Stieltjes' work.  Given an asymptotic 
expansion of a function 
\begin{equation}\label{eq:fas}
f(\l)=\sum_{i=0}(-1)\frac{c_i}{\l^{i+1}}, \qquad \lambda \rightarrow \infty
\end{equation}
one associates to it a sequence $\{f_j, j\in \N\}$ of continued fractions, which are Pad\'{e} approximants when written as rational functions: 
\begin{equation}\label{confrac}
f_j=\cfrac{1}{\lambda\,a_{1}+\cfrac{1}{a_{2}+\cfrac{1}{\lambda\,a_{3}+\dots
+\cfrac{1}
{a_{j-1}+\cfrac{1}{\lambda\,a_{j}}}}}}
\end{equation}
These are obtained by requiring that each $f_j$ approximates $f$ with an error term being $O(\frac{1}{\l^{j+1}})$.  
In other words, the $j$th approximant recovers $j$ terms of the 
asymptotic expansion of $f$.  The essential role in determining 
how the coefficients $a_1, a_2, \cdots $ are constructed 
from the asymptotic expansion is played 
by the infinite Hankel matrix $H$ constructed out of the 
coefficients of the expansion \eqref{eq:fas} via 
\begin{equation*} 
H=\begin{bmatrix} c_0&c_1&c_2&c_3&\cdots\\c_1&c_2&c_3&c_4&\cdots\\
c_2&c_3&c_4&c_5&\cdots\\
\vdots&\vdots&\vdots&\vdots&\vdots \end{bmatrix}
\end{equation*} 
and the determinants of the $k\times k$ submatrices of $H$ whose top-left entry is $c_\ell$ in the top row of $H$, where $\ell\geq0,k>0$.  We denote this determinant by $\Delta_k^\ell$ and impose the convention that $\Delta^\ell_0=1$.  Using this notation, the formulas for $a_j$ in \eqref{confrac} 
read 
\begin{equation} \label{eq:Sformulas} 
a_{2k}=\frac{(\Delta^0_k)^2}{\Delta_k^1 \Delta_{k-1}^1}, \qquad \qquad 
a_{2k+1}=\frac{(\Delta^1_k)^2}{\Delta_k^1 \Delta_{k+1}^1}.  
\end{equation}

We note that in this setup the function $f$ vanishes 
at $\lambda=\infty$.  Thus to compare with our case we  
might consider $zW=\frac1\lambda W$.  Then \autoref{eq:lambdaWeylassympt} reads 
\begin{equation*}
\frac{W}{\l} -\frac{T_{21}(\lambda)}{\l T_{11}(\lambda)}=\textrm{O}(\frac{1}{\lambda^{2d+1}}), \quad \lambda \rightarrow 
\infty. 
\end{equation*}
This result shows that the first $2d$ terms in the asymptotic --- in fact analytic --- expansion of 
$\frac{W}{\l}$ and $\frac{T_{21}(\lambda)}{\l T_{11}(\lambda)}$ are 
identical.  However, the full continued fraction expansion of 
$\frac{T_{21}(\lambda)}{\l T_{11}(\lambda)}$ is known (see \cite{beals-sattinger-szmigielski:stieltjes}) to be 
\begin{equation*}
\frac{T_{21}(\lambda)}{\l T_{11}(\lambda)}=\cfrac{1}{\lambda\,l_{d}+\cfrac{1}{m_{d}+\cfrac{1}{\lambda\,l_{d-1}+\dots
+\cfrac{1}
{m_{1}+\cfrac{1}{\lambda\,l_0}}}}} 
\end{equation*}
where $l_j=x_{j+1}-x_{j}, j=0, \cdots, d$ and $x_0= -\frac{1}{2\nu}, x_{d+1}=\frac{1}{2\nu}$.  
In our case, we have $2d$ terms which give us the first $(2d-1)$-th 
approximants by \autoref{eq:Sformulas}, while we formally need to go up to the $(2d+1)$-th approximant to determine $m_1$ and $l_0$.  
However, recall that we know that the total length of the string is 
$ \frac{1}{\nu}$; hence, $l_0+=\frac{1}{\nu}-(l_1\cdots +l_d)$.  Likewise, 
the total mass $M$ is known from the coefficient of $\Tr A(z)$ at $z^{d-1}$; hence, $m_1=M-(m_d+\cdots+m_2)$.  
\end{proof} 

\begin{remark} 
The inverse problem for the CF equation \eqref{eq:peakonODEs} 
with $u$ satisfying \eqref{eq:um} (in the special case $C=0$) 
is given a different treatment in \cite{beals-sattinger-szmigielski:CF}.  
The solution outlined in the present paper is closer in spirit to the original 
solution of the inverse problem for the case $\beta_+=0$ in \cite{beals-sattinger-szmigielski:moment}.  

  The outstanding 
problem for either treatment is to characterize the generalized 
Weyl functions $W$ with the known theta function representation of 
the flow (see \cite[Theorem 6.2]{beals-sattinger-szmigielski:CF}).  For example, a function $f(z)$ is a Weyl function for the CH peakon problem ($\beta_+=0$) 
if and only if $f$ is 
\begin{enumerate} 
\item a rational function vanishing at infinity with simple poles on the real axis; 
\item all its residues are  positive.  

\end{enumerate} 
 It remains an open question what replaces these conditions in the CF case.  

\end{remark}

\bibliographystyle{abbrv}
\bibliography{CF}

\end{document}